\documentclass[11pt,a4paper]{article}
\usepackage{xspace}
\usepackage{vaucanson-g}
\usepackage{amsmath,amssymb,amsthm,amscd,mathrsfs}
\usepackage{clrscode} 

\newtheorem{theorem}{Theorem}
\newtheorem{proposition}[theorem]{Proposition}
\newtheorem{corollary}[theorem]{Corollary}
\newtheorem{lemma}[theorem]{Lemma}



\newcommand{\Sect}{\operatorname{Sect}}

\def\A{\mathcal{A}}
\def\B{\mathcal{B}} 
\def\C{\mathscr{C}} 
\def\R{{R}}

\def\T{{T}}

\DeclareMathOperator{\height}{\text{height}}

\newcommand{\ie}{{\itshape i.e.}\xspace }
\newcommand{\eg}{{\itshape e.g.}\xspace }
\newcommand{\etal}{{\itshape et al.}\xspace }
\newcommand{\resp}{{resp.}\xspace }
\ChgStateLineWidth{0.5}
\ChgEdgeLineWidth{0.5}
\FixVCScale{0.4}



\title{A quadratic algorithm for road coloring}
\author{Marie-Pierre B{\'e}al and Dominique Perrin
  \thanks{Universit\'e Paris-Est, Laboratoire d'informatique
    Gaspard-Monge CNRS UMR 8049, 5 boulevard Descartes, 77454
    Marne-la-Vall\'ee, France,
    \{beal,perrin\}@univ-mlv.fr}
    \thanks{This work is supported by
    French National Agency (ANR) through "Programme d'Investissements
    d'Avenir" (Project ACRONYME $\text{n}^\circ$ANR-10-LABX-58).}  }

\PlainState
\begin{document}
\maketitle

\begin{abstract}
  The Road Coloring Theorem states that every aperiodic directed graph
  with constant out-degree has a synchronized coloring. This theorem
  had been conjectured during many years as the Road Coloring Problem
  before being settled by A. Trahtman. Trahtman's proof leads to an
  algorithm that finds a synchronized labeling with a cubic worst-case
  time complexity.  We show a variant of his construction with a
  worst-case complexity which is quadratic in time and linear in
  space. We also extend the Road Coloring Theorem to the periodic
  case.
\end{abstract}
\section{Introduction}

Imagine a map with roads which are colored in such a way that a fixed
sequence of colors, called a homing sequence, leads the traveler to a
fixed place whatever the starting point is.  Such a coloring of the
roads is called synchronized and finding a synchronized coloring is
called the Road Coloring Problem. In terms of graphs, it consists in
finding a synchronized labeling in a directed graph.

The Road Coloring Theorem states that every aperiodic directed graph
with constant out-degree has a synchronized coloring (a graph is
aperiodic if it is strongly connected and the gcd of the length of the
cycles is equal to 1).  It has been conjectured under the name of the
Road Coloring Problem by Adler, Goodwin, and Weiss
\cite{AdlerGoodwynWeiss77}, and solved for many particular types of
automata (see for instance~\cite{AdlerGoodwynWeiss77},
\cite{OBrien81}, \cite{Carbone01}, \cite{Kari03}, \cite{Friedman90},
\cite{PerrinSchutzenberger92}). Trahtman settled the conjecture
in~\cite{Trahtman09}. In this paper, by Road Coloring Problem we
understand the algorithmic problem of finding a synchronized coloring
on a given graph (and not the existence of a polynomial algorithm
which is solved by the Road Coloring Theorem).

Solving the Road Coloring problem in each particular case is not only
a puzzle but has many applications in various areas like coding or
design of computational systems.  These systems are often modeled by
finite-state automata (\ie graphs with labels).  Due to some noise,
the system may take a wrong transition.  This noise may for instance
result from the physical properties of sensors, from unreliability of
computational hardware, or from insufficient speed of the computer
with respect to the arrival rate of input symbols.  It turns out that
the asymptotic behavior of synchronized automata is better than the
behavior of unsynchronized ones (see~\cite{DelyonMaler94}).
Synchronized automata are thus less sensitive to the effect of noise.

In the domain of coding, automata with outputs (\ie transducers) can
be used either as encoders or as decoders. When they are synchronized,
the behavior of the coder (or of the decoder) is improved in the
presence of noise or errors (see \cite{BerstelPerrinReutenauer2010},
\cite{Jurgensen2008}). For instance, the well-known Huffman
compression scheme leads to a synchronized decoder provided the
lengths of the codewords of the Huffman code are relatively prime.  It
is also a consequence of the Road Coloring Theorem that coding schemes
for constrained channels can have sliding block decoders and
synchronized encoders (see \cite{AdlerCoppersmithHassner83} and
\cite{LindMarcus95}).
 
Trahtman's proof is constructive and leads to an algorithm that finds
a synchronized labeling with a cubic worst-case time
complexity~\cite{Trahtman09,Trahtman2011}. The algorithm consists in
a sequence of flips of edges going out of some state so that the
resulting automaton is synchronized. One first searches a sequence of
flips leading to an automaton which has a so-called stable pair of
states (\ie with good synchronizing properties). One then
computes the quotient of the automaton by the congruence generated by
the stable pairs. The process is then iterated on this smaller
automaton. Trahtman's method for finding the sequence of flips leading
to a stable pair has a worst-case quadratic time complexity, which
makes his algorithm cubic.

In this paper, we design a worst-case linear time algorithm for
finding a sequence of flips until the automaton has a stable pair.
This makes the algorithm for computing a synchronized coloring
quadratic in time and linear in space.  The sequence of flips is
obtained by fixing a color, say red, and by considering the red cycles
formed with red edges, taking into account the positions of the roots
of red trees attached to each cycle. The prize to pay for decreasing
the time complexity is some more complication in the choice of the
flips. We also extend the Road Coloring Theorem to periodic graphs by
showing that Trahtman's algorithms provides a minimal-rank coloring.
Another proof of this result using semigroup tools, obtained
independently, is given in \cite{BudzbanFeinsilver11}. For related
results, see also \cite{Trahtman2010} and \cite{JonoskaKarl1999}.

The complexity of synchronization problems on automata has been
already studied (see \cite{KariVolkov2013} for a survey). It is
well-known that there is an $O(n^2)$ algorithm to test whether an
$n$-state automaton on a fixed-size alphabet is synchronized. The
complexity of computing a specific synchronizing word is $O(n^3)$ (see
\cite{Eppstein1990}).  However, the complexity of finding a
synchronizing word of a given length is NP-complete
\cite{Eppstein1990} (see also \cite{OlschewskiUmmels2010},
\cite{Roman2011}). The complexity of problems on automata has also
been studied for random automata (see
\cite{CarayolNicaud2012}). Several results prove that, under
appropriate hypotheses, a random irreducible automaton is synchronized
\cite{FreilingEtAl2003}, \cite{SkvortsovZaks2010}, and
\cite{Nicaud2013}. The average time complexity of these problems does not
seem to be known. In particular, we do not know the average time
complexity of the Road Coloring Problem.

The article is organized as follows. In Section~\ref{section.road}, we
give some definitions to formulate the problem in terms of finite
automata instead of graphs. In Section~\ref{section.algo} we describe
Trahtman's algorithm and our variant is detailed in
Section~\ref{section.algo2}. We give both an informal description of
the algorithm with pictures illustrating the constructions, and a
pseudocode. The time and space complexity of the algorithm are
analyzed in Section~\ref{section.complexity}. The periodic case is
treated in Section~\ref{section.periodic}. A preliminary version of this paper was posted in \cite{BealPerrin2008}.

\section{The Road Coloring Theorem} \label{section.road} 

In order to formulate the \emph{Road Coloring Problem} we introduce
the notation concerning automata.  

Let $A$ be a finite alphabet and let $Q$ be a finite set.  We denote
by $A^*$ the set of words over $A$. 

A (finite) \emph{automaton} $\A=(Q,E)$ over the alphabet $A$ with $Q$
as set of states is a given by a set $E$ of edges which are triples
$(p,a,q)$ where $p,q$ are states and $a$ is a symbol from $A$ called
the label of the edge. Note that no initial or final states are
specified.  Let $F$ be the multiset formed of the pairs $(p,q)$
obtained from the set $E$ by the map $(p,a,q) \mapsto (p,q)$.  The
multigraph having $Q$ as set of vertices and $F$ as set of edges is
called the \emph{underlying graph} of $\A$.

A \emph{path} in the automaton is sequence of consecutive edges. The
label of the path $((p_i,a_i,p_{i+1})_{0 \leq i \leq n}$ is the word $a_0
  \dotsm a_{n}$. The state $p_0$ is its origin and $p_{n+1}$ is its end.
The \emph{length} of the path is $n+1$. The path is a \emph{cycle} if $p_0=p_{n+1}$.

An automaton is \emph{deterministic} if, for each state $p$ and each
letter $a$, there is at most one edge starting at $p$ and labeled with
$a$.  It is \emph{complete deterministic} if, for each state $p$ and
each letter $a$, there is exactly one edge starting at $p$ and
labeled with $a$.  This implies that for each state $p$ and each word
$w$ there is exactly one path starting at $p$ and labeled with $w$.
The end of this unique path is denoted by $p \cdot w$.

An automaton is \emph{irreducible} if its underlying graph is strongly
connected. The \emph{period} of an automaton is the gcd of length of
its cycles. An automaton is \emph{aperiodic} if it is
irreducible and of period $1$\footnote{Note that this notion, which is
  usual for graphs, is not the notion of aperiodic automata used
  elsewhere and which refers to the period of words labeling the
  cycles (see \eg \cite{Eilenberg76B}).}.

A \emph{synchronizing word} of a complete deterministic automaton
$\A=(Q,E)$ is a word $w \in A^*$ such that for every pair of states
$p,q \in Q$, one has $p \cdot w = q \cdot w$.  A synchronizing word is
also called a \emph{reset sequence} \cite{Eppstein1990}, or a
\emph{magic sequence} \cite{BoyleMass2000,BoyleMass2004}, or also a
\emph{homing word} \cite{PomeranzReddy1994}. An automaton which has a
synchronizing word is called \emph{synchronized} (see an example on
the right part of Fig.~\ref{figure.automate1}).

Two automata which have isomorphic underlying graphs are called
\emph{equivalent}. Hence two equivalent automata differ only by the
labeling of their edges. In the sequel, we shall consider only complete
deterministic automata.

\begin{proposition} A synchronized complete deterministic automaton is aperiodic.
\end{proposition}
\begin{proof}
We assume that the automaton has at least one edge.  Let $(p,a,q)$ be
an edge of the automaton.  Let $w$ be a synchronizing word focusing to
a state $r$.  Since the graph is strongly connected, there is a word
$v$ such that from $r \cdot v = p$. Thus $p \cdot awv p = p \cdot
wv p$.  The lengths of the cycles from $p$ to $p$ labeled $awv$ and
$wv$ differ by $1$.  This implies that the period of automaton is $1$.
\end{proof}

The \emph{Road Coloring Theorem} can be stated as follows.

\begin{theorem}[A. Trahtman~\cite{Trahtman09}] \label{theorem.road}
Any aperiodic complete deterministic automaton is equivalent to a synchronized one.
\end{theorem}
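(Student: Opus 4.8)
The plan is to prove Theorem~\ref{theorem.road} by the strategy outlined in the introduction, namely reducing the problem to finding a \emph{stable pair} of states and then quotienting. Recall that in a complete deterministic automaton $\A=(Q,A,E)$ a pair $(s,t)$ is \emph{stable} if for every word $u$ there is a word $v$ such that $s\cdot uv = t\cdot uv$; stability is a congruence-compatible relation, so one may form the quotient automaton $\A/\!\!\sim$ where $\sim$ is the congruence generated by $(s,t)$. Two facts drive the argument: first, if $\B$ is equivalent to $\A$ (same underlying graph, possibly relabelled edges) and $\B$ has a stable pair, then the quotient $\B/\!\!\sim$ is again aperiodic, complete deterministic, and strictly smaller; second, if $\A$ itself is synchronized then trivially done, and more usefully, if every equivalent automaton obtained along the way eventually collapses to a single state, we can pull back a synchronizing word. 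So the induction is on $\card Q$: the base case $\card Q=1$ is trivial, and the inductive step needs us to produce, for \emph{some} automaton equivalent to $\A$, a nontrivial stable pair.

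First I would set up the combinatorial tools on a fixed equivalent automaton. Since $\A$ is aperiodic, the out-degree is some constant $k=\card A$; fix one letter, say the color red, and look at the functional graph of the red edges: it is a union of disjoint ``rho-shaped'' components, each consisting of one red cycle with red in-trees hanging off the cycle vertices. The key invariant to track is the multiset of \emph{levels} of the tree roots, i.e.\ the depths (distances from the deepest leaves, or equivalently the structure of how the trees sit on the cycles). The operation allowed is a \emph{flip}: replace one red edge $(p,\mathrm{red},q)$ by $(p,\mathrm{red},q')$ where $(p,b,q')$ was some other edge — swapping the colors on the two edges out of $p$. Flips change the red functional graph but not the underlying graph, hence stay within the equivalence class. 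The goal of this stage is a lemma: by a sequence of flips one can reach an automaton whose red subgraph has maximal tree height concentrated so that the bottom of a longest tree branch, together with a suitable shifted copy, forms a \emph{deadlocked} or \emph{stable} pair. The classical mechanism (following Trahtman and Culik--Karhumäki--Kari before him) is that if all red trees are trivial then the red permutation decomposes $Q$ into cycles all of the same length $d$; aperiodicity forces $d=1$ when combined with the action of the other letters, contradiction unless $Q$ is a single red cycle, and then one shows some other letter must be non-bijective, allowing a flip that creates a nontrivial tree. Once a nontrivial tree exists, the two states sitting at maximal level that map to the same state one step up are a stable pair — this uses that any word $u$ first pushes both states up along the red tree into the cycle, and then the cycle structure lets a further word identify them, precisely because among all equivalent colorings we chose one minimizing (lexicographically) the sorted sequence of tree-root levels, so no word can ``separate'' the chosen pair without contradicting minimality.

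After establishing the stable-pair lemma, the rest is bookkeeping: take the congruence $\sim$ generated by the stable pair $(s,t)$, check that $\A/\!\!\sim$ is complete deterministic (automatic, since $\sim$ is a congruence for the transition action), irreducible (quotient of strongly connected is strongly connected), and aperiodic (the cycles of the quotient are images of cycles upstairs, so their gcd still divides the old gcd, which is $1$; and one must check the quotient is not ``accidentally periodic,'' which follows because a word labelling a cycle in the quotient lifts to a path upstairs whose endpoints are $\sim$-related, and iterating and using stability one finds an actual cycle of a length congruent to it). By the induction hypothesis $\A/\!\!\sim$ is equivalent to a synchronized automaton $\C$; lift the relabelling of $\C$ back to a relabelling $\B$ of $\A$ in the obvious way (the quotient map is a bijection on edges-out-of-each-state, so a coloring downstairs determines one upstairs). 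A synchronizing word $w$ for $\C$ drives all of $\B$ into a single $\sim$-class, i.e.\ into $\{s,t\}$ or an iterated such class; then by stability of $(s,t)$ (and of the pairs generating the finer classes) append words to collapse each two-element class to a point, obtaining a genuine synchronizing word for $\B$. Hence $\A$ is equivalent to the synchronized automaton $\B$, completing the induction.

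The main obstacle, and the only place real work happens, is the stable-pair lemma — specifically guaranteeing that flips can always \emph{strictly decrease} the chosen potential (the sorted level sequence of red-tree roots) until a stable pair appears, and proving that at the minimum the candidate pair really is stable rather than merely deadlocked. This is exactly the content the introduction flags as delicate (``the prize to pay $\dots$ is some more complication in the choice of the flips''), and it is where aperiodicity is used in an essential, non-cosmetic way: without it the red permutation could be a single long cycle with no trees and no letter forcing a flip, and indeed periodic automata need not have synchronized colorings. I expect the proof to handle this by a careful case analysis on whether the red subgraph has a nontrivial tree, pushing the analysis to a maximal-height branch and using a counting/parity argument against the gcd of cycle lengths.
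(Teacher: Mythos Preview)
Your high-level architecture---induct on $|Q|$, recolor to exhibit a stable pair, quotient by the congruence it generates, apply the hypothesis, and lift the synchronized coloring back via Lemma~\ref{lemma.lift}---is exactly the paper's (and Trahtman's). The gap is in the stable-pair lemma itself, which is the whole content of the theorem.

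The paper's argument (Lemma~\ref{lemma.etoile}) is not an extremal principle over colorings but a \emph{minimal-image} argument. After flips achieve Condition~(*) (all maximal-level states lie in a single tree rooted at~$r$), take a minimal image $I$ containing a maximal state $p$ of level~$\ell$; minimality of $I$ forces every other state of $I$ to have level strictly below~$\ell$. Set $J=I\cdot a^{\ell-1}$ and $K=J\cdot a^{m}$ with $m$ a common multiple of the red-cycle lengths: then $J$ and $K$ are minimal images differing in exactly one element, namely $s_1$ (the child of $r$ on the branch to $p$) versus $s_0$ (the predecessor of $r$ on the red cycle). Since $|J\cdot vw|=|K\cdot vw|=|I|$ for every $v$ and every minimal-rank word $w$, one gets $s_0\cdot vw=s_1\cdot vw$ for all $v$, which is precisely stability of $(s_0,s_1)$.

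Your proposed pair---two maximal-level states with a common red successor---need not exist (maximal states in one tree may have distinct parents), and when it does it is only \emph{synchronizable}, not stable: a word beginning with a non-red letter can send the two states anywhere, and nothing in your lexicographic potential on tree heights constrains their subsequent orbits. There is no mechanism in your sketch connecting the red-tree profile of the \emph{fixed} coloring to synchronizability of $(p\cdot u,q\cdot u)$ for \emph{arbitrary} $u$; the minimal-image machinery is exactly what supplies that bridge, and it is absent from your outline. A smaller slip: in the level-zero case the red cycles need not all have the same length, and aperiodicity of the full graph says nothing directly about them; the paper simply locates a non-bunch state (one must exist, else the whole graph is a single cycle, contradicting aperiodicity) and flips there to create a tree of positive height.
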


\begin{figure}[htbp]
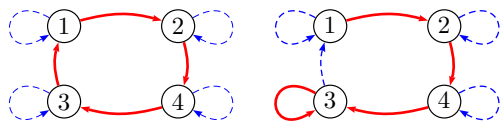

    \centering
\FixVCScale{0.5}
\VCDraw{%
\begin{VCPicture}{(0,-1)(12,2)}
\MediumState
\State[1]{(0,2)}{1}
\State[2]{(3,2)}{2}
\State[3]{(0,0)}{3}
\State[4]{(3,0)}{4}
\ChgEdgeLineColor{blue}
\ChgEdgeLineStyle{dashed}
\LoopW{1}{}
\LoopE{2}{}
\LoopW{3}{}
\LoopE{4}{}
\ChgEdgeLineWidth{2}
\ChgEdgeLineStyle{solid}
\ChgEdgeLineColor{red}
\ArcL[.5]{1}{2}{}
\ArcL[.5]{2}{4}{}
\ArcL[.5]{4}{3}{}
\ArcL[.5]{3}{1}{}
\VCPut{(7,0)}{
\MediumState
\State[1]{(0,2)}{1}
\State[2]{(3,2)}{2}
\State[3]{(0,0)}{3}
\State[4]{(3,0)}{4}
\ChgEdgeLineWidth{2}
\ChgEdgeLineColor{red}
\LoopW{3}{}
\ArcL[.5]{1}{2}{}
\ArcL[.5]{2}{4}{}
\ArcL[.5]{4}{3}{}
\ChgEdgeLineWidth{1}
\ChgEdgeLineStyle{dashed}
\ChgEdgeLineColor{blue}
\LoopW{1}{}
\LoopE{2}{}
\LoopE{4}{}
\ArcL[.5]{3}{1}{}
}
\end{VCPicture}%
        }
        \caption{Two complete aperiodic deterministic automata
          over the alphabet $A=\{a,b\}$. A thick red plain edge is an edge labeled by
          $a$ while a thin blue dashed edge is an edge labeled by
          $b$. The automaton on the left is not synchronized.  The one
          on the right is synchronized. For instance, the word $aaa$
          is a synchronizing word. The two automata are equivalent
          since their underlying graph are isomorphic.}\label{figure.automate1}
\end{figure}

A trivial case for solving the Road Coloring Theorem is the case where
the automaton has a loop edge around some state $r$
\cite{OBrien81}. Indeed, since the graph of the automaton is strongly
connected, there is a spanning tree rooted at~$r$ (with the edges of
the tree oriented towards the root). Let us label the edges of this
tree and the loop by the letter $a$. This coloring is synchronized by
the word $a^{n-1}$, where $n$ is the number of states.

\section{An algorithm for finding a synchronized coloring} \label{section.algo}

Trahtman's proof of Theorem~\ref{theorem.road} is constructive and
gives an algorithm for finding a labeling (also called a coloring) which
makes the automaton synchronized provided it is aperiodic.

In the sequel $\A$ denotes an $n$-state complete deterministic
automaton over an alphabet $A$. We fix a particular letter $a \in A$.
Edges labeled by $a$ are also called \emph{red edges} or
$a$-\emph{edges}. The other ones are called \emph{blue} or
$b$-\emph{edges}.

A pair $(p,q)$ of states in an automaton is \emph{synchronizable} if
there is a word $w$ with $p \cdot w = q \cdot w$.  It is well-known
that an automaton is synchronized if all its pairs of states are
synchronizable (see for instance Proposition~3.6.5 in
\cite{BerstelPerrinReutenauer2010}). 

A pair $(p,q)$ of states in an automaton is \emph{stable} if and only
if, for any word $u$, the pair $(p \cdot u, q \cdot u)$ is
synchronizable. This notion was introduced in \cite{CulikEtAl1999}.
In a synchronized automaton, any pair of states is stable.  Note that
if $(p,q)$ is a stable pair, then for any word $u$, $(p \cdot u, q
\cdot u)$ is also a stable pair, hence the terminology. Note also that
if $(p,q)$ and $(q,r)$ are stable pairs then $(p,r)$ is also a stable
pair. It follows that the relation defined on the set of states by $p
\equiv q$ if $(p,q)$ is a stable pair is an equivalence relation. As
observed in \cite[Lemma~2]{Kari03}, this relation is a congruence (\ie
$p \cdot u \equiv q \cdot u$ whenever $p \equiv q$) called \emph{the
  stable pair congruence}.  More generally, a congruence is
\emph{stable} if any pair of states in the same class is stable. The
congruence \emph{generated} by a stable pair $(p,q)$ is the least
congruence such that $p$ and $q$ belong to the same class. It is a
stable congruence. Given a congruence on the states of an automaton,
we denote by $\bar{p}$ the class of a state $p$.

If $\A=(Q,E)$ is an automaton, the \emph{quotient} of $\A$ by a stable
pair congruence is the automaton $\B$ whose states are the classes of
$Q$ under the congruence. The edges of $B$ are the triples
$(\bar{p},c,\bar{q})$ where $(p,c,q)$ is an edge of $\A$.  The
automaton $\B$ is complete deterministic when $\A$ is complete
deterministic. The automaton $\B$ is irreducible (\resp aperiodic)
when $\A$ is irreducible (\resp aperiodic).

The following Lemma was obtained by Culik \etal \cite{CulikEtAl02}.
We reproduce the proof since it helps understanding Trahtman's algorithm
(see the procedure \textsc{FindSynchronizedColoring} below).

\begin{lemma}[Culik \etal \cite{CulikEtAl02}] \label{lemma.lift} If
  the quotient of an automaton $\A$ by a stable pair congruence
  is equivalent to a synchronized automaton, then there is a
  synchronized automaton equivalent to $\A$.
\end{lemma}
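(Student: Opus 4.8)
The plan is to lift to $\A$ the relabeling that turns $\B$ into $\B'$. Since $\B'$ is equivalent to $\B$, fix a graph isomorphism between them; because both are complete deterministic I may replace $\B'$ by an isomorphic copy so that the isomorphism is the identity on the common state set, namely the set of $\equiv$-classes, where $\equiv$ denotes the given stable pair congruence. Then $\B$ and $\B'$ have the same underlying graph and differ only by labels, so for each class $C$ there is a permutation $\sigma_C$ of $A$ with $C\cdot_{\B'}\ell = C\cdot_{\B}\sigma_C(\ell)$ for every $\ell\in A$. Now define $\A'$ on the same states and the same underlying graph as $\A$, redirecting the $\ell$-edge out of a state $p$ to $p\cdot_{\A}\sigma_{\bar p}(\ell)$, where $\bar p$ is the class of $p$. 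By construction $\A'$ is complete deterministic and equivalent to $\A$.

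Next I would check that $\equiv$ is still a congruence of $\A'$ and that $\A'/{\equiv}=\B'$. The first point holds because the permutation used at a state depends only on its class: if $p\equiv q$ then $\sigma_{\bar p}=\sigma_{\bar q}$, so $p\cdot_{\A'}\ell = p\cdot_{\A}\sigma_{\bar p}(\ell)\equiv q\cdot_{\A}\sigma_{\bar p}(\ell)=q\cdot_{\A'}\ell$ since $\equiv$ is a congruence of $\A$. For the second point, the $\ell$-edge of $\A'/{\equiv}$ out of $\bar p$ leads to $\overline{p\cdot_{\A}\sigma_{\bar p}(\ell)}=\bar p\cdot_{\B}\sigma_{\bar p}(\ell)=\bar p\cdot_{\B'}\ell$, which is exactly the $\ell$-edge of $\B'$.

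The core step is to show that every pair $(p,q)$ with $p\equiv q$ is synchronizable in $\A'$. Consider the pair automaton whose states are the ordered pairs $(p,q)$ with $p\equiv q$ (well defined because $\equiv$ is a congruence), with the diagonal pairs treated as absorbing, and with transitions inherited from the ambient automaton; a pair is synchronizable exactly when the diagonal is reachable from it in this automaton. The key observation is that for $p\equiv q$ the permutations $\sigma_{\bar p}$ and $\sigma_{\bar q}$ coincide, so passing from $\A$ to $\A'$ only relabels the transitions of this pair automaton and leaves its underlying graph unchanged. Hence reachability of the diagonal is the same in $\A$ and in $\A'$; since $\equiv$ is a stable pair congruence of $\A$, every in-class pair is synchronizable in $\A$, hence in $\A'$.

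Finally I would conclude: for an arbitrary pair $(p,q)$ of states of $\A'$, synchronizedness of $\B'$ gives a word $w$ with $p\cdot_{\A'}w\equiv q\cdot_{\A'}w$, and the previous step gives a word $v$ with $p\cdot_{\A'}wv=q\cdot_{\A'}wv$; thus every pair of $\A'$ is synchronizable, so $\A'$ is synchronized and equivalent to $\A$, as required. I expect the only genuinely delicate point to be the third step, where one must recognize that synchronizability of in-class pairs depends only on the underlying graph of the restricted pair automaton, which is invariant under a relabeling that is constant on $\equiv$-classes; the rest is routine bookkeeping about quotients.
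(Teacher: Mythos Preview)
Your proof is correct and follows essentially the same route as the paper: lift the relabeling from $\B$ to $\B'$ class-by-class to obtain $\A'$, observe that $\B'$ is the quotient of $\A'$ by the same congruence, use a synchronizing word of $\B'$ to bring any pair of states of $\A'$ into a single class, and then synchronize inside that class.

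The one genuine difference is the level of detail in the third step. The paper simply asserts that once $p\cdot w$ and $q\cdot w$ lie in the same $\equiv$-class they form a stable pair \emph{of $\A'$}, and concludes synchronizability; but stability was hypothesized for $\A$, not for $\A'$, so this deserves an argument. You supply one: because the relabeling permutation $\sigma_{\bar p}$ depends only on the class, the restricted pair automaton on in-class pairs has the same underlying graph in $\A$ and in $\A'$, so reachability of the diagonal --- i.e.\ synchronizability --- is preserved. This is exactly the missing justification, and it is the cleanest way to fill it. Your formulation via the permutations $\sigma_C$ also makes the construction of $\A'$ more transparent than the paper's description in terms of individual flips, though the two constructions coincide.
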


\begin{proof}
  Let $\B$ be the quotient of $\A$ by a stable congruence and let
  $\B'$ be a synchronized automaton equivalent to $\B$.  We define an
  automaton $\A'$ equivalent to $\A$ as follows. The number of edges
  of $\A$ going out of $p$ and ending in states belonging to a same
  class $\bar{q}$ is equal to the number of edges of $\B$ (and thus
  $\B'$) going out of $\bar{p}$ and ending in $\bar{q}$. We define
  $\A'$ by labeling these edges according to the labeling of
  corresponding edges in $\B'$. The automaton $\B'$ is a quotient
  of~$\A'$.
 

  Let us show that $\A'$ is synchronized. Let $w$ be a synchronizing
  word of $\B'$ and $r$ the state ending any path labeled by $w$ in
  $\B'$. Let $p,q$ be two states of $\A'$.  Then $p \cdot w$ and $q
  \cdot w$ belong to the same congruence class. Hence $(p \cdot w, q
  \cdot w)$ is a stable pair of $\A'$. Therefore $(p,q)$ is a
  synchronizable pair of $\A'$. Since all pairs of $\A'$ are
  synchronizable, $\A'$ is synchronized.
\end{proof}

Trahtman's algorithm for finding a synchronized coloring of an
aperiodic automaton $\A$ consists in finding an equivalent automaton
$\A'$ of $\A$ which has at least one stable pair $(s,t)$, then in
recursively finding a synchronized coloring $\B'$ for the quotient
automaton $\B$ by the congruence generated by $(s,t)$, and finally in
lifting up this coloring to the initial automaton as follows. If there
is an edge $(p,c,q)$ in $\A$ but no edge $(\bar{p},c,\bar{q})$
in $\B'$, then there is an edge $(\bar{p},d,\bar{q})$ in $\B'$ with
$c\neq d$. Then we flip the labels of the two edges labeled $c$
and $d$ going out of $p$ in $\A$.

The algorithm for finding a synchronized coloring is described in the
following pseudocode.  The procedure \textsc{FindStablePair}, which
finds an equivalent automaton which has a stable pair of states, is
described in the next section. The procedure \textsc{Merge} computes
the quotient of an automaton by the stable congruence generated by a
stable pair of states.  The procedure \textsc{Update} updates some
data needed for the computation as described in
Section~\ref{section.update}.

\begin{small}
\begin{codebox}
\Procname{$\proc{FindSynchronizedColoring}(\text{aperiodic automaton } \A, \text{quotient automaton } \B)$} 
\li  $\B \gets \A$
\li  \While (size($\B$) $ > 1$) 
\li      \Do \proc{Update($\B$)}
\li       $\B, (s,t) \gets$ \proc{FindStablePair($\B$)}
\li       lift the coloring up from $\B$ to the automaton $\A$ 
\li       $\B \gets$ \proc{Merge$(\B,(s,t))$}
  \End
\li \Return $\A$ 
\end{codebox}
\end{small}


The termination of the algorithm is guaranteed by the fact that the
number of states of the quotient automaton of $\B$ is strictly less
than the number of states of $\B$. The computation of the quotient
automaton (performed by the Procedure \textsc{Merge}) is described in
Section~\ref{section.pseudocode}.

\section{Finding a stable pair} \label{section.algo2}

In this section, we consider an aperiodic complete deterministic
automaton $\A$ over the alphabet $A$. We design a linear-time
algorithm for finding an equivalent automaton which has a stable pair.

In order to describe the algorithm, we give some definitions and
notation.

Let $\R$ be the subgraph of the graph of $\A$ made of the red edges.
The graph $\R$ is a disjoint union of connected components called
\emph{clusters}. Since each state has exactly one outgoing edge in
$\R$, each cluster contains a unique (red) cycle with trees attached to the
cycle at their roots. If $r$ is the root of such a tree, its
\emph{children} are the states $p$ such that $p$ is not on the a red
cycle and $(p,a,r)$ is an edge. If $p,q$ belong to the same tree, $p$ is
an \emph{ancestor} of $q$ (or $q$ is a \emph{descendant} of $p$) in
the tree if there is a red path from $q$ to $p$. Note that in these trees,
the edges are oriented from the child to the parents and the paths from
the descendant to the ancestors.

If $q$ belongs to some red cycle of length greater than $1$, its
\emph{predecessor} is the unique state $p$ belonging to the same cycle
such that $(p,a,q)$ is an edge. In the case the length of the cycle
is $1$, we set that the predecessor is $q$ itself.

For each state $p$ belonging to some cluster, we define the \emph{level} of $p$
as the distance between $p$ and the root of the tree containing
$p$. If $p$ belongs to the cycle of the cluster, its level is thus null.
The \emph{level of an automaton} is the maximal level of its states.
A \emph{maximal state} is a state of maximal level.  
A \emph{maximal tree} is a tree containing at least one maximal state and rooted at
a state of level $0$. 
 A \emph{maximal root} is the root of a maximal
tree and a \emph{maximal child} of a maximal root $r$ is a child of $r$ having at least one maximal state as descendant.



The algorithm for finding a coloring which has a stable pair relies on
the following key lemma due to Trahtman~\cite{Trahtman09}. It uses the
notion of minimal images in an automaton.  An \emph{image} in an
automaton $\A=(Q,E)$ is a set of states $I = Q \cdot w$, where $w$ is
a word and $Q \cdot w = \{q \cdot w \mid q \in Q\}$.  A \emph{minimal
  image} in an automaton is an image which does not properly contain
another image. In an irreducible automaton two minimal images have
the same cardinality which is called the \emph{minimal rank} of~$\A$.
Also, if $I$ is a minimal image and $u$ is a word, then $I \cdot u$ is
again a minimal image and the map $p \rightarrow p \cdot u$ is
one-to-one from $I$ onto $I \cdot u$.

Note that the hypotheses in the statement below depend on the choice
of the letter $a$ defining the red edges.
\begin{lemma}[Trahtman~\cite{Trahtman09}] \label{lemma.sameTree} Let
  $\A$ be an irreducible complete deterministic automaton with a
  positive level.  If all maximal states in $\A$ belong to the same
  tree, then $\A$ has a stable pair.
\end{lemma}

\begin{proof}
  Since $\A$ is irreducible, there is a minimal image $I$ containing a
  maximal state $p$. Let $\ell > 0$ the level of $p$ (\ie the distance
  between $p$ and the root $r$ of the unique maximal tree). Let us assume
  that there is a state $q \neq p$ in $I$ of level $\ell$.  Then the
  cardinal of $I \cdot a^{\ell}$ is strictly less than the cardinal of
  $I$, which contradicts the minimality of $I$. Thus all states but
  $p$ in $I$ have level strictly less than~$\ell$.

  Let $m$ be a common multiple of the lengths of all red cycles. Let
  $C$ be the red cycle containing $r$. Let $s_0$ be the predecessor
  of $r$ in $C$ and $s_1$ the child of $r$ containing $p$ in its
  subtree. Since $\ell > 0$, we have $s_0 \neq s_1$.  Let $J = I \cdot
  a^{\ell-1}$ and $K = J \cdot a^m$.  Since the level of all states of
  $I$ but $p$ is less than or equal to $\ell-1$, the set $J$ is equal
  to $\{s_1\} \cup R$, where $R$ is a set of states belonging to the
  red cycles.  Since for any state $q$ in a red cycle, $q \cdot a^m
  =q$, we get $K = \{s_0\} \cup R$.
 
  Let $w$ be a word such that $Q \cdot w$ is a minimal image. For any
  word $v$, the minimal images $J \cdot vw$ and $K \cdot vw$ have the
  same cardinal equal to the cardinal of $I$.  We claim that the set
  $(J \cup K) \cdot vw$ is a minimal image. Indeed, $J \cdot vw
  \subseteq (J \cup K) \cdot vw \subseteq Q \cdot vw$, hence all three
  are equal. But $(J \cup K) \cdot vw = R \cdot vw \cup s_0 \cdot vw
  \cup s_1 \cdot vw$.  This forces $s_0 \cdot vw = s_1 \cdot vw$ since
  the cardinality of $R \cdot vw$ cannot be less than the cardinality
  of $R$. As a consequence $(s_0 \cdot v,s_1 \cdot v)$ is
  synchronizable and thus $(s_0,s_1)$ is a stable pair.
\end{proof}

In the sequel, we call Condition~$\C$ the assumption of
Lemma \ref{lemma.sameTree}: \emph{all maximal states belong to the same
  tree}.  

In the subsections below, we describe sequences of flips of edges
that make the resulting equivalent automaton satisfy Condition $\C$
and hence have a stable pair. We consider several cases corresponding
to the geometry of the automaton.

\subsection{The case of null maximal level}

In this section, we assume that the level of the automaton is $\ell
=0$. The subgraph $\R$ of red edges is a disjoint union of cycles.  

A set of edges going out of a state $p$ is called a \emph{bunch} if
these edges all end in a same state $q$. Note that if a state $q$ has
two incoming bunches from two states $p,p'$, then $(p,p')$ is a
stable pair.

If the set of outgoing edges of each state is a bunch, then there is
only one red cycle, and the automaton is not aperiodic unless the
trivial case where the length of this cycle is $1$. We can thus assume
that there is a state $p$ whose set of outgoing edges is not a
bunch. There exists $b \neq a$ and $q \neq r$ such that $(p,a,q)$ and
$(p,b,r)$ are edges. We flip these two edges. This gives an automaton $\A$
which satisfies Condition~$\C$.  Let $s$ be the state which is the
predecessor of $r$ in its red cycle.  It follows from the proof of
Lemma~\ref{lemma.sameTree} that the pair $(p,s)$ is a stable pair.

This case is described in the pseudocode \textsc{LevelZeroFlipEdges}
where \textsc{GetPredecessor}$(r)$ returns the predecessor of $r$ on its red cycle.  The function \textsc{LevelZero\-FlipEdges}$(\A)$
returns an automaton equivalent to $\A$ together with a stable pair.

\begin{small} 
\begin{codebox}
\Procname{$\proc{LevelZeroFlipEdges }(\text{automaton } \A \text{ of level } \ell = 0$)}
\li    \For each state  $p$ on a red cycle $C$     
\li        \Do \If the set of outgoing edges of $p$ is not a bunch
\li                \Then  let $e = (p,a,q)$ and $f = (p,b,r)$ be edges with $b \neq a$ and $q \neq r$
\li                       \proc{Flip}$(e,f)$
\li                       $s \gets$ \proc{GetPredecessor}$(r)$
\li                      \Return $\A$, $(p,s)$
               \End
       \End 
\li    \Return \proc{Error}($\A$ is not aperiodic)  
\end{codebox}
\end{small}

The procedure \textsc{Flip}$(e,f)$
exchanges the labels of two edges $e,f$. It also performs the corresponding update of data as explained in Section~\ref{section.update}. 

\subsection{The case of non-null maximal level}

In this section, we assume that the level of the automaton is $\ell > 0$.

\subsubsection{Main treatment}




We describe a sequence of flips of edges such that the automaton
obtained after this sequence of flips has a unique maximal tree. Note
that the levels and other useful data will not be recomputed after
each flip (which would increase the time complexity too much).

Let $C$ be a red cycle containing a maximal tree $\T$ rooted at $r$.
We denote by $r_1=r,r_2, \dotsc r_k$ the maximal roots of $C$ in the
order given by the orientation of the red edges of the cycle.  For $k
> 1$ and $1 \leq i \leq k$ we denote by $I(r_{i})$ the set of states
contained in the red simple path from the root $r_j$ with $j= ({i-1}
\bmod k) +1$ to $r_i$ with $r_j$ included and $r_{i}$ excluded.  For
$k = 1$ we define $I(r)$ as the set of all states of $C$.
Similarly, for  $k > 1$ and $1 \leq i \leq k$ we denote by $J(r_{i})$
the set of states contained in the red simple path from the root $r_j$
with $j= ({i-1} \bmod k) +1$ to $r_i$ with $r_j$ excluded and $r_{i}$
included.  For $k = 1$ we define $J(r)$ as the set of all states of
$C$.

We denote by $s_0$ the predecessor of $r$ in $C$. If the length of
$C$ is $1$, $s_0=r$. We denote by $S(r)$ the set of maximal children
of $r$ (\ie which are ancestors of some maximal state). Let $\rho$ be
the cardinality of $S(r)$. For each $s$ in $S(r)$, we choose a maximal
state $p$ in the subtree rooted at $s$ (see Fig.
\ref{figure.tree}). There may be several possible choices for the
state $p$ and we select one of them arbitrarily. We denote by $P(r)$
the set of these maximal states. This set has cardinality $\rho$.

The key idea, in order to guarantee the global linear complexity, is
to perform operations for each maximal root $r$, whose time complexity
is linear in the number of nodes belonging to trees attached to the
states contained in $J(r)$.

\begin{figure}[htbp]
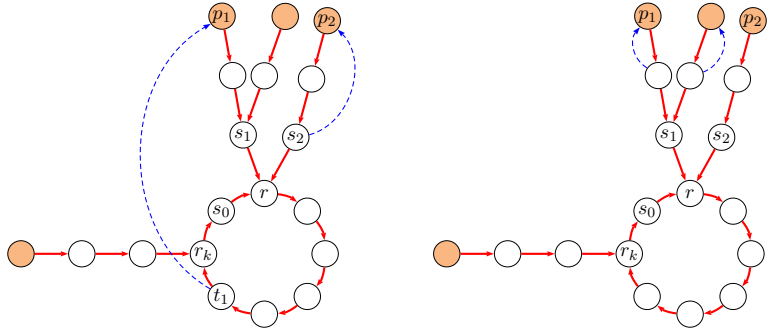

    \centering
\VCDraw{%
\begin{VCPicture}{(0,3)(12,14)}
\MediumState
\VCPut{(2,6)}{
\State[]{(2,0)}{1}%
\ChgStateFillColor{Apricot}%
 \RstStateFillColor
\VCPut[45]{(0,0)}{%
    \State[]{(2,0)}{2}
}
\VCPut[90]{(0,0)}{%
    \State[r]{(2,0)}{3}
}
\VCPut[75]{(0,0)}{%
    \State[s_2]{(4,0)}{11}%
    \State[]{(6,0)}{12}%
    \ChgStateFillColor{Apricot}%
    \State[p_2]{(8,0)}{18}%
    \RstStateFillColor
}
\VCPut[90]{(0,0)}{%
    \State[]{(5.9,0)}{19}
}
\VCPut[85]{(0,0)}{%
    \ChgStateFillColor{Apricot}%
    \State[]{(7.9,0)}{20}%
    \RstStateFillColor
}
\VCPut[100]{(0,0)}{%
    \State[s_1]{(4,0)}{13}%
    \State[]{(6,0)}{14}%
    \ChgStateFillColor{Apricot}
    \State[p_1]{(8,0)}{16}%
    \RstStateFillColor  
}
\VCPut[135]{(0,0)}{%
    \State[s_0]{(2,0)}{4}}
\VCPut[180]{(0,0)}{%
    \State[r_k]{(2,0)}{5}
    \State[]{(4,0)}{21}
    \State[]{(6,0)}{22}
    \ChgStateFillColor{Apricot} \State[]{(8,0)}{23}  \RstStateFillColor  
}
\VCPut[225]{(0,0)}{%
    \State[t_1]{(2,0)}{6} 
}
\VCPut[270]{(0,0)}{%
    \State[]{(2,0)}{7}
}
\VCPut[315]{(0,0)}{%
    \State[]{(2,0)}{8}}
\ChgEdgeLineWidth{2}
\ChgEdgeLineColor{red}
\ChgEdgeLineStyle{solid}
\ArcL[.5]{1}{8}{}
\ArcL[.5]{2}{1}{}
\ArcL[.5]{3}{2}{}
\ArcL[.5]{4}{3}{}
\ArcL[.5]{5}{4}{}
\ArcL[.5]{6}{5}{}
\ArcL[.5]{7}{6}{}
\ArcL[.5]{8}{7}{}
\EdgeL[.5]{10}{9}{}
\EdgeL[.5]{9}{1}{}
\EdgeL[.5]{12}{11}{}
\EdgeL[.5]{11}{3}{}
\EdgeL[.5]{14}{13}{}
\EdgeL[.5]{13}{3}{}
\EdgeL[.5]{16}{14}{}
\EdgeL[.5]{18}{12}{}
\EdgeL[.5]{19}{13}{}
\EdgeL[.5]{20}{19}{}
\EdgeL[.5]{23}{22}{}
\EdgeL[.5]{22}{21}{}
\EdgeL[.5]{21}{5}{}
\ChgEdgeLineColor{blue}
\ChgEdgeLineWidth{1}
\ChgEdgeLineStyle{dashed}
\VArcR[.5]{arcangle=60}{6}{16}{}
\VArcR[.5]{arcangle=-60}{11}{18}{}
}
\VCPut{(16,6)}{
\State[]{(2,0)}{1}%
\ChgStateFillColor{Apricot}%
 \RstStateFillColor
\VCPut[45]{(0,0)}{%
    \State[]{(2,0)}{2}
}
\VCPut[90]{(0,0)}{%
    \State[r]{(2,0)}{3}
}
\VCPut[75]{(0,0)}{%
    \State[s_2]{(4,0)}{11}%
    \State[]{(6,0)}{12}%
    \ChgStateFillColor{Apricot}%
    \State[p_2]{(8,0)}{18}%
    \RstStateFillColor
}
\VCPut[90]{(0,0)}{%
    \State[]{(5.9,0)}{19}
}
\VCPut[85]{(0,0)}{%
    \ChgStateFillColor{Apricot}%
    \State[]{(7.9,0)}{20}%
    \RstStateFillColor
}
\VCPut[100]{(0,0)}{%
    \State[s_1]{(4,0)}{13}%
    \State[]{(6,0)}{14}%
    \ChgStateFillColor{Apricot}
    \State[p_1]{(8,0)}{16}%
    \RstStateFillColor  
}
\VCPut[135]{(0,0)}{%
    \State[s_0]{(2,0)}{4}}
\VCPut[180]{(0,0)}{%
    \State[r_k]{(2,0)}{5}
    \State[]{(4,0)}{21}
    \State[]{(6,0)}{22}
    \ChgStateFillColor{Apricot} \State[]{(8,0)}{23}  \RstStateFillColor  
}
\VCPut[225]{(0,0)}{%
    \State[]{(2,0)}{6} 
}
\VCPut[270]{(0,0)}{%
    \State[]{(2,0)}{7}
}
\VCPut[315]{(0,0)}{%
    \State[]{(2,0)}{8}}
\ChgEdgeLineWidth{2}
\ChgEdgeLineColor{red}
\ChgEdgeLineStyle{solid}
\ArcL[.5]{1}{8}{}
\ArcL[.5]{2}{1}{}
\ArcL[.5]{3}{2}{}
\ArcL[.5]{4}{3}{}
\ArcL[.5]{5}{4}{}
\ArcL[.5]{6}{5}{}
\ArcL[.5]{7}{6}{}
\ArcL[.5]{8}{7}{}
\EdgeL[.5]{10}{9}{}
\EdgeL[.5]{9}{1}{}
\EdgeL[.5]{12}{11}{}
\EdgeL[.5]{11}{3}{}
\EdgeL[.5]{14}{13}{}
\EdgeL[.5]{13}{3}{}
\EdgeL[.5]{16}{14}{}
\EdgeL[.5]{18}{12}{}
\EdgeL[.5]{19}{13}{}
\EdgeL[.5]{20}{19}{}
\EdgeL[.5]{23}{22}{}
\EdgeL[.5]{22}{21}{}
\EdgeL[.5]{21}{5}{}
\ChgEdgeLineColor{blue}
\ChgEdgeLineWidth{1}
\ChgEdgeLineStyle{dashed}
\VArcR[.5]{arcangle=50}{14}{16}{}
\VArcR[.5]{arcangle=-50}{19}{20}{}
}
\end{VCPicture}%
        }
        \caption{On the left part the figure, the dashed (blue) edge
          ending in $p_1$ has type 1 while the one ending in $p_2$ has
          type 3. On the right part, the set $L_{s_1}$ of dashed
          (blue) edges cover all maximal states of the subtree rooted
          at the child $s_1$.  }\label{figure.tree}
\end{figure}

Since the automaton is irreducible, for each $p \in P(r)$ there is at
least one blue edge ending in $p$. Each blue edge $(t,b,p)$ ending in
a state $p \in P(r)$ can be of one of the following type depending
on the position of $t$ in the graph:
\begin{itemize}
\item type 0: $t$ is not in the same cluster as $r$, or $t$ has a
  positive level and $t$ is not an ancestor of $p$ in $\T$.
\item type 1: $t$ is in the same cluster as $r$, has a null level, and
  $t$ is outside the interval $I(r)$.
\item type 2: $t$ is in the same cluster as $r$, has a null level, and
  $t$ is contained in the interval $I(r)$. This includes the
  particular case where $k=1$ and $t=r$.
\item type 3: $t$ is an ancestor of $p$ in $\T$ and $t \neq r$.
\end{itemize}
Note that it is possible that $t=p$. In this case the edge $(t,b,p)$
has type 0 since $t$ has a positive level.

A procedure \textsc{FindEdges}$(r)$, that will be described later in
detail (see Section~\ref{section.auxiliary}), first flips some edges and returns
a value of one of the following forms.
\begin{itemize} 
\item A pair $(0,e)$, where $e$ is an edge of type 0.
\item A triple $(1,e,f)$, where $e,f$ are two edges of type 1 or 2
  ending in distinct states of $P(r)$.
\item A pair $(2,e)$, where $e$ is an edge of type 1 or 2.
Moreover, in this case, the procedure modifies the tree $\T$ in such a way
that $r$ has a unique maximal child.
\item A pair $(3,e)$, where $e$ is an edge of type 3 starting at a
  state which is an ancestor of all maximal nodes of $\T$.
\end{itemize}   

For each maximal root $r$, the procedure \textsc{FlipEdges}$(\A,r)$
returns either an automaton equivalent to $\A$ together with a stable
pair, or an automaton equivalent to $\A$ together with one edge
$(t_r,b_r,p_r)$.  Its execution depends on the value returned by
\textsc{FindEdges}$(r)$ according to the following four cases
described below. After running \textsc{FlipEdges}$(\A,r)$ on each
maximal root, we obtain either an automaton satisfying Condition~$\C$
(\ie which has a stable pair) or an automaton where each maximal root
$r$ has a unique maximal child and such that the potential flip of
$(t_r,b_r,p_r)$ with the red edge starting at $t_r$ makes the root $r$
not maximal anymore. In the first case, our goal is achieved. In the
latter case, we flip the blue edge $(t_r,b_r,p_r)$ and the red one
starting at $t_r$ for all maximal roots $r$ \emph{but one}. We get an
equivalent automaton which has unique maximal tree and thus has a
stable pair by Lemma~\ref{lemma.sameTree}. The combination of all
these transformations is realized by the procedure
\textsc{FindStablePair} given at the end of this section.

The possible values returned by the procedure \textsc{FlipEdges}$(\A,r)$
are the following.
\begin{itemize}
\item Case 0. The value returned by \textsc{FindEdges}$(r)$ is $(0,e)$
  with $e=(t_1,b_1,p_1)$ of type 0. The procedure
  \textsc{FlipEdges}$(\A,r)$ returns the automaton obtained by
  flipping the edge $(t_1,b_1,p_1)$ and the red edge going out of
  $t_1$. This automaton is equivalent to $\A$ and satisfies
  Condition~$\C$. Indeed, one may easily check that, after the flip,
  all states of maximal level belong to the same tree as $p_1$.
%

\begin{figure}[htbp]
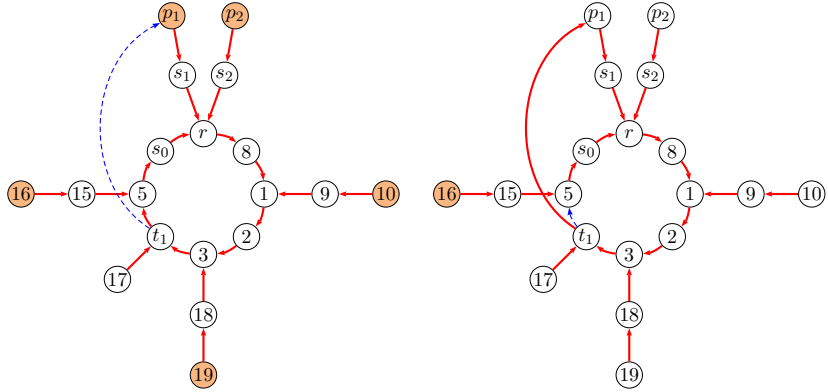

    \centering
\VCDraw{%
\begin{VCPicture}{(0,-1)(12,13)}
\MediumState
\VCPut{(0,6)}{
\State[1]{(2,0)}{1}%
\State[9]{(4,0)}{9}%
\ChgStateFillColor{Apricot}%
\State[10]{(6,0)}{10}%
 \RstStateFillColor
\VCPut[45]{(0,0)}{%
    \State[8]{(2,0)}{2}
}
\VCPut[90]{(0,0)}{%
    \State[r]{(2,0)}{3}
}
\VCPut[80]{(0,0)}{%
    \State[s_2]{(4,0)}{11}%
    \ChgStateFillColor{Apricot}%
    \State[p_2]{(6,0)}{12}%
    \RstStateFillColor
}
\VCPut[100]{(0,0)}{%
    \State[s_1]{(4,0)}{13}%
    \ChgStateFillColor{Apricot}
    \State[p_1]{(6,0)}{14}%
    \RstStateFillColor  
}
\VCPut[135]{(0,0)}{%
    \State[s_0]{(2,0)}{4}}
\VCPut[180]{(0,0)}{%
    \State[5]{(2,0)}{5}
    \State[15]{(4,0)}{15}%
    \ChgStateFillColor{Apricot}
    \State[16]{(6,0)}{16}%
    \RstStateFillColor
}
\VCPut[225]{(0,0)}{%
    \State[t_1]{(2,0)}{6}
    \State[17]{(4,0)}{17}
}
\VCPut[270]{(0,0)}{%
    \State[3]{(2,0)}{7}
    \State[18]{(4,0)}{18}%
    \ChgStateFillColor{Apricot}
    \State[19]{(6,0)}{19}%
    \RstStateFillColor 
}
\VCPut[315]{(0,0)}{%
    \State[2]{(2,0)}{8}}
\ChgEdgeLineWidth{2}
\ChgEdgeLineColor{red}
\ChgEdgeLineStyle{solid}
\ArcL[.5]{1}{8}{}
\ArcL[.5]{2}{1}{}
\ArcL[.5]{3}{2}{}
\ArcL[.5]{4}{3}{}
\ArcL[.5]{5}{4}{}
\ArcL[.5]{6}{5}{}
\ArcL[.5]{7}{6}{}
\ArcL[.5]{8}{7}{}
\EdgeL[.5]{10}{9}{}
\EdgeL[.5]{9}{1}{}
\EdgeL[.5]{12}{11}{}
\EdgeL[.5]{11}{3}{}
\EdgeL[.5]{14}{13}{}
\EdgeL[.5]{13}{3}{}
\EdgeL[.5]{16}{15}{}
\EdgeL[.5]{15}{5}{}
\EdgeL[.5]{17}{6}{}
\EdgeL[.5]{19}{18}{}
\EdgeL[.5]{18}{7}{}
\ChgEdgeLineColor{blue}
\ChgEdgeLineWidth{1}
\ChgEdgeLineStyle{dashed}
\VArcR[.5]{arcangle=60}{6}{14}{}
}
\VCPut{(14,6)}{
\State[1]{(2,0)}{1}%
\State[9]{(4,0)}{9}%
\State[10]{(6,0)}{10}%
\VCPut[45]{(0,0)}{%
    \State[8]{(2,0)}{2}
}
\VCPut[90]{(0,0)}{%
    \State[r]{(2,0)}{3}
}
\VCPut[80]{(0,0)}{%
    \State[s_2]{(4,0)}{11}%
    \State[p_2]{(6,0)}{12}%
}
\VCPut[100]{(0,0)}{%
    \State[s_1]{(4,0)}{13}%
    \State[p_1]{(6,0)}{14}%
}
\VCPut[135]{(0,0)}{%
    \State[s_0]{(2,0)}{4}}
\VCPut[180]{(0,0)}{%
    \State[5]{(2,0)}{5}
    \State[15]{(4,0)}{15}%
    \ChgStateFillColor{Apricot}
    \State[16]{(6,0)}{16}%
    \RstStateFillColor
}
\VCPut[225]{(0,0)}{%
    \State[t_1]{(2,0)}{6}
    \State[17]{(4,0)}{17}
}
\VCPut[270]{(0,0)}{%
    \State[3]{(2,0)}{7}
    \State[18]{(4,0)}{18}%
    \State[19]{(6,0)}{19}%
}
\VCPut[315]{(0,0)}{%
    \State[2]{(2,0)}{8}}
\ChgEdgeLineWidth{2}%
\ChgEdgeLineColor{red}%
\ChgEdgeLineStyle{solid}%
\ArcL[.5]{1}{8}{}
\ArcL[.5]{2}{1}{}
\ArcL[.5]{3}{2}{}
\ArcL[.5]{4}{3}{}
\ArcL[.5]{5}{4}{}
\ArcL[.5]{7}{6}{}
\ArcL[.5]{8}{7}{}
\EdgeL[.5]{10}{9}{}
\EdgeL[.5]{9}{1}{}
\EdgeL[.5]{12}{11}{}
\EdgeL[.5]{11}{3}{}
\EdgeL[.5]{14}{13}{}
\EdgeL[.5]{13}{3}{}
\EdgeL[.5]{16}{15}{}
\EdgeL[.5]{15}{5}{}
\EdgeL[.5]{17}{6}{}
\EdgeL[.5]{19}{18}{}
\EdgeL[.5]{18}{7}{}
\VArcR[.5]{arcangle=60}{6}{14}{}
\ChgEdgeLineColor{blue}%
\ChgEdgeLineWidth{1}%
\ChgEdgeLineStyle{dashed}%
\ArcL[.5]{6}{5}{}%
}
\end{VCPicture}%
        }
        \caption{The picture on the left illustrates Case 1.1.
          The edge $(t_1,b_1,p_1)$ if of type 1. After flipping the
          edge $(t_1,b_1,p_1)$ and the red edge going out of $t_1$, we
          get the automaton on the right. It satisfies the
          Condition~$\C$, \ie it has a unique maximal tree (here
          rooted at $r$). Maximal states are colored and the (dashed)
          $b$-edges of the automaton are not all
          represented.}\label{figure.cluster1}
\end{figure}
\item Case 1. The value returned by \textsc{FindEdges}$(r)$ is
$(1,e_1,e_2)$, with $e_1=(t_1,b_1,p_1)$, $e_2=(t_2,b_2,p_2)$ of
type 1 or 2. Recall that $p_1 \neq p_2$ and that $b_1,b_2 \neq a$.
\begin{itemize}
\item Case 1.1. If $e_1$ (or $e_2$) has type 1, the same
conclusion as in Case 0 holds by flipping the edge $(t_1,b_1,p_1)$ and the red edge
going out of~$t_1$, as is shown in Fig.~\ref{figure.cluster1}.
\item Case 1.2. In the case both edges $e_1,e_2$ have type 2 and $t_1
  \neq t_2$, without loss of generality, we may assume that $t_1 <
  t_2$ in the interval $I(r)$ (see Fig.~\ref{figure.cluster2}).  We
      flip the edge $(t_1,b_1,p_1)$ and the red edge going out of
      $t_1$.  We denote by $\T'$ the tree rooted at~$r$ after this
      flip.
\begin{itemize}
\item Case 1.2.1.
If the height of $\T'$
    is greater than $\ell$, the automaton satisfies Condition~$\C$ (see
    the right part of~Fig.~\ref{figure.cluster2}). 
\item Case 1.2.2.  Otherwise the height of $\T'$ is at most $\ell$
  (see the left part of~Fig.~\ref{figure.cluster3}).  In that case,
  we also flip the edge $(t_2,b_2,p_2)$ and the red edge going out of
  $t_2$. The new equivalent automaton satisfies Condition~$\C$ (see
  the right part of~Fig.~\ref{figure.cluster3}).  The computation of
  the size of $\T'$ is detailed in Section~\ref{section.complexity}.
\end{itemize}

\begin{figure}[htbp]
    \centering
\VCDraw{%
\begin{VCPicture}{(0,-1)(12,12)}
\MediumState
\VCPut{(0,6)}{
\State[1]{(2,0)}{1}%
\State[9]{(4,0)}{9}%
\ChgStateFillColor{Apricot}
\State[10]{(6,0)}{10}%
 \RstStateFillColor
\VCPut[45]{(0,0)}{%
    \State[8]{(2,0)}{2}
}
\VCPut[90]{(0,0)}{%
    \State[r]{(2,0)}{3}
}
\VCPut[60]{(0,0)}{%
    \State[21]{(4,0)}{21}%
}
\VCPut[80]{(0,0)}{%
    \State[s_2]{(4,0)}{11}%
    \ChgStateFillColor{Apricot}
    \State[p_2]{(6,0)}{12}%
    \RstStateFillColor
}
\VCPut[100]{(0,0)}{%
    \State[s_1]{(4,0)}{13}%
    \ChgStateFillColor{Apricot}
    \State[p_1]{(6,0)}{14}%
    \RstStateFillColor  
}
\VCPut[135]{(0,0)}{%
    \State[t_2]{(2,0)}{4}
    \State[20]{(4,0)}{20}%
}
\VCPut[180]{(0,0)}{%
    \State[5]{(2,0)}{5}
    \State[15]{(4,0)}{15}%
}
\VCPut[225]{(0,0)}{%
    \State[t_1]{(2,0)}{6}
    \State[17]{(4,0)}{17}
    \ChgStateFillColor{Apricot}   
     \State[22]{(6,0)}{22}
    \RstStateFillColor
}
\VCPut[270]{(0,0)}{%
    \State[3]{(2,0)}{7}
    \State[18]{(4,0)}{18}%
    \ChgStateFillColor{Apricot}
    \State[19]{(6,0)}{19}%
    \RstStateFillColor 
}
\VCPut[315]{(0,0)}{%
    \State[2]{(2,0)}{8}}
\ChgEdgeLineWidth{2}
\ChgEdgeLineColor{red}
\ChgEdgeLineStyle{solid}
\ArcL[.5]{1}{8}{}
\ArcL[.5]{2}{1}{}
\ArcL[.5]{3}{2}{}
\ArcL[.5]{4}{3}{}
\ArcL[.5]{5}{4}{}
\ArcL[.5]{6}{5}{}
\ArcL[.5]{7}{6}{}
\ArcL[.5]{8}{7}{}
\EdgeL[.5]{10}{9}{}
\EdgeL[.5]{9}{1}{}
\EdgeL[.5]{12}{11}{}
\EdgeL[.5]{11}{3}{}
\EdgeL[.5]{14}{13}{}
\EdgeL[.5]{13}{3}{}
\EdgeL[.5]{15}{5}{}
\EdgeL[.5]{17}{6}{}
\EdgeL[.5]{19}{18}{}
\EdgeL[.5]{18}{7}{}
\EdgeL[.5]{20}{4}{}
\EdgeL[.5]{22}{17}{}
\EdgeL[.5]{21}{3}{}
\ChgEdgeLineColor{blue}
\ChgEdgeLineWidth{1}
\ChgEdgeLineStyle{dashed}
\VArcR[.5]{arcangle=70,ncurv=0.9}{6}{14}{}
\VArcR[.5]{arcangle=50}{4}{12}{}
}
\VCPut{(14,6)}{

\State[1]{(2,0)}{1}%
\State[9]{(4,0)}{9}%
\State[10]{(6,0)}{10}%
\VCPut[45]{(0,0)}{%
    \State[8]{(2,0)}{2}
}
\VCPut[90]{(0,0)}{%
    \State[r]{(2,0)}{3}
}
\VCPut[60]{(0,0)}{%
    \State[21]{(4,0)}{21}%
}
\VCPut[80]{(0,0)}{%
    \State[s_2]{(4,0)}{11}%
    \State[p_2]{(6,0)}{12}%
}
\VCPut[100]{(0,0)}{%
    \State[s_1]{(4,0)}{13}%
    \State[p_1]{(6,0)}{14}%
}
\VCPut[135]{(0,0)}{%
    \State[t_2]{(2,0)}{4}
    \State[20]{(4,0)}{20}%
}
\VCPut[180]{(0,0)}{%
    \State[5]{(2,0)}{5}   
    \ChgStateFillColor{Apricot}   
    \State[15]{(4,0)}{15}%
   \RstStateFillColor
}
\VCPut[225]{(0,0)}{%
    \State[t_1]{(2,0)}{6}
    \State[17]{(4,0)}{17}
     \State[22]{(6,0)}{22}
}
\VCPut[270]{(0,0)}{%
    \State[3]{(2,0)}{7}
    \State[18]{(4,0)}{18}%
    \State[19]{(6,0)}{19}%
}
\VCPut[315]{(0,0)}{%
    \State[2]{(2,0)}{8}}
\ChgEdgeLineWidth{2}
\ChgEdgeLineColor{red}
\ChgEdgeLineStyle{solid}
\ArcL[.5]{1}{8}{}
\ArcL[.5]{2}{1}{}
\ArcL[.5]{3}{2}{}
\ArcL[.5]{4}{3}{}
\ArcL[.5]{7}{6}{}
\ArcL[.5]{8}{7}{}
\EdgeL[.5]{10}{9}{}
\EdgeL[.5]{9}{1}{}
\EdgeL[.5]{12}{11}{}
\EdgeL[.5]{11}{3}{}
\EdgeL[.5]{14}{13}{}
\EdgeL[.5]{13}{3}{}
\EdgeL[.5]{15}{5}{}
\EdgeL[.5]{17}{6}{}
\EdgeL[.5]{19}{18}{}
\EdgeL[.5]{18}{7}{}
\EdgeL[.5]{20}{4}{}
\EdgeL[.5]{22}{17}{}
\EdgeL[.5]{21}{3}{}
\ArcL[.5]{5}{4}{}
\VArcR[.5]{arcangle=70,ncurv=0.9}{6}{14}{}
\ChgEdgeLineColor{blue}
\ChgEdgeLineWidth{1}
\ChgEdgeLineStyle{dashed}
\VArcR[.5]{arcangle=50}{4}{12}{}
\ArcL[.5]{6}{5}{}
}
\end{VCPicture}%
        }
        \caption{The picture on the left illustrates Case 1.2.1 of
          the main treatment. There are two edges $(t_1,b_1,p_1)$,
          $(t_2,b_2,p_2)$ of type 2. The height of the tree $\T'$
          obtained after flipping the edge $(t_1,b_1,p_1)$ and the red
          edge going out of $t_1$, is 3, which is greater than the maximal
          level. We get a unique maximal tree rooted at $r$ in the
          same cluster.  The picture on the right illustrates the
          result.}
\label{figure.cluster2}
\end{figure}
\begin{figure}[h]
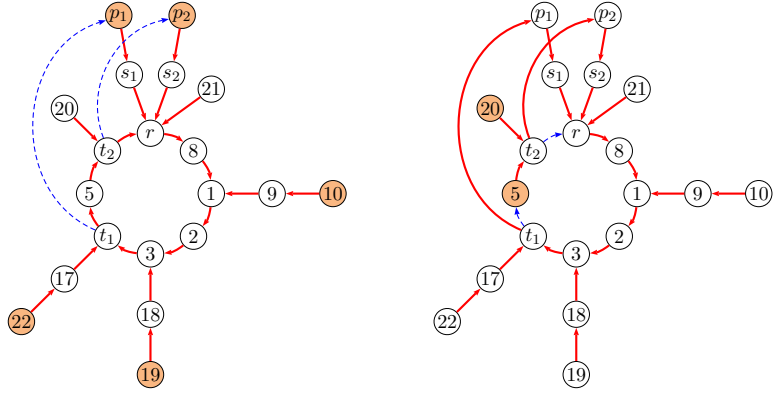

    \centering
\VCDraw{%
\begin{VCPicture}{(0,-1)(12,12)}
\MediumState
\VCPut{(0,6)}{
\State[1]{(2,0)}{1}%
\State[9]{(4,0)}{9}%
\ChgStateFillColor{Apricot}
\State[10]{(6,0)}{10}%
 \RstStateFillColor
\VCPut[45]{(0,0)}{%
    \State[8]{(2,0)}{2}
}
\VCPut[90]{(0,0)}{%
    \State[r]{(2,0)}{3}
}
\VCPut[60]{(0,0)}{%
    \State[21]{(4,0)}{21}%
}
\VCPut[80]{(0,0)}{%
    \State[s_2]{(4,0)}{11}%
    \ChgStateFillColor{Apricot}
    \State[p_2]{(6,0)}{12}%
    \RstStateFillColor
}
\VCPut[100]{(0,0)}{%
    \State[s_1]{(4,0)}{13}%
    \ChgStateFillColor{Apricot}
    \State[p_1]{(6,0)}{14}%
    \RstStateFillColor  
}
\VCPut[135]{(0,0)}{%
    \State[t_2]{(2,0)}{4}
    \State[20]{(4,0)}{20}%
}
\VCPut[180]{(0,0)}{%
    \State[5]{(2,0)}{5}
}
\VCPut[225]{(0,0)}{%
    \State[t_1]{(2,0)}{6}
    \State[17]{(4,0)}{17}
    \ChgStateFillColor{Apricot}   
     \State[22]{(6,0)}{22}
    \RstStateFillColor
}
\VCPut[270]{(0,0)}{%
    \State[3]{(2,0)}{7}
    \State[18]{(4,0)}{18}%
    \ChgStateFillColor{Apricot}
    \State[19]{(6,0)}{19}%
    \RstStateFillColor 
}
\VCPut[315]{(0,0)}{%
    \State[2]{(2,0)}{8}}
\ChgEdgeLineWidth{2}
\ChgEdgeLineColor{red}
\ChgEdgeLineStyle{solid}
\ArcL[.5]{1}{8}{}
\ArcL[.5]{2}{1}{}
\ArcL[.5]{3}{2}{}
\ArcL[.5]{4}{3}{}
\ArcL[.5]{5}{4}{}
\ArcL[.5]{6}{5}{}
\ArcL[.5]{7}{6}{}
\ArcL[.5]{8}{7}{}
\EdgeL[.5]{10}{9}{}
\EdgeL[.5]{9}{1}{}
\EdgeL[.5]{12}{11}{}
\EdgeL[.5]{11}{3}{}
\EdgeL[.5]{14}{13}{}
\EdgeL[.5]{13}{3}{}
\EdgeL[.5]{17}{6}{}
\EdgeL[.5]{19}{18}{}
\EdgeL[.5]{18}{7}{}
\EdgeL[.5]{20}{4}{}
\EdgeL[.5]{22}{17}{}
\EdgeL[.5]{21}{3}{}
\ChgEdgeLineColor{blue}
\ChgEdgeLineWidth{1}
\ChgEdgeLineStyle{dashed}
\VArcR[.5]{arcangle=70,ncurv=0.9}{6}{14}{}
\VArcR[.5]{arcangle=50}{4}{12}{}
}
\VCPut{(14,6)}{

\State[1]{(2,0)}{1}%
\State[9]{(4,0)}{9}%
\State[10]{(6,0)}{10}%
\VCPut[45]{(0,0)}{%
    \State[8]{(2,0)}{2}
}
\VCPut[90]{(0,0)}{%
    \State[r]{(2,0)}{3}
}
\VCPut[60]{(0,0)}{%
    \State[21]{(4,0)}{21}%
}
\VCPut[80]{(0,0)}{%
    \State[s_2]{(4,0)}{11}%
    \State[p_2]{(6,0)}{12}%
}
\VCPut[100]{(0,0)}{%
    \State[s_1]{(4,0)}{13}%
    \State[p_1]{(6,0)}{14}%
}
\VCPut[135]{(0,0)}{%
    \State[t_2]{(2,0)}{4}
    \ChgStateFillColor{Apricot}
    \State[20]{(4,0)}{20}%
    \RstStateFillColor 
}
\VCPut[180]{(0,0)}{%
    \ChgStateFillColor{Apricot}  
     \State[5]{(2,0)}{5}  
    \RstStateFillColor 
}
\VCPut[225]{(0,0)}{%
    \State[t_1]{(2,0)}{6}
    \State[17]{(4,0)}{17}
     \State[22]{(6,0)}{22}
}
\VCPut[270]{(0,0)}{%
    \State[3]{(2,0)}{7}
    \State[18]{(4,0)}{18}%
    \State[19]{(6,0)}{19}%
}
\VCPut[315]{(0,0)}{%
    \State[2]{(2,0)}{8}}
\ChgEdgeLineWidth{2}
\ChgEdgeLineColor{red}
\ChgEdgeLineStyle{solid}
\ArcL[.5]{1}{8}{}
\ArcL[.5]{2}{1}{}
\ArcL[.5]{3}{2}{}
\ArcL[.5]{7}{6}{}
\ArcL[.5]{8}{7}{}
\EdgeL[.5]{10}{9}{}
\EdgeL[.5]{9}{1}{}
\EdgeL[.5]{12}{11}{}
\EdgeL[.5]{11}{3}{}
\EdgeL[.5]{14}{13}{}
\EdgeL[.5]{13}{3}{}
\EdgeL[.5]{17}{6}{}
\EdgeL[.5]{19}{18}{}
\EdgeL[.5]{18}{7}{}
\EdgeL[.5]{20}{4}{}
\EdgeL[.5]{22}{17}{}
\EdgeL[.5]{21}{3}{}
\ArcL[.5]{5}{4}{}
\VArcR[.5]{arcangle=70,ncurv=0.9}{6}{14}{}
\VArcR[.5]{arcangle=50}{4}{12}{}
\ChgEdgeLineColor{blue}
\ChgEdgeLineWidth{1}
\ChgEdgeLineStyle{dashed}
\ArcL[.5]{4}{3}{}
\ArcL[.5]{6}{5}{}
}
\end{VCPicture}%
        }
        \caption{The picture on the left illustrates Case
          1.2.2. The two edges $(t_1,b_1,p_1)$, $(t_2,b_2,p_2)$ are of
          type 2. The height of the tree $\T'$ obtained after flipping
          the edge $(t_1,b_1,p_1)$ and the red edge going out of
          $t_1$, is equal to $\ell = 2$. In this case, we also flip
          the edge $(t_2,b_2,p_2)$ and the red edge going out of
          $t_2$. We get a unique maximal tree rooted at $r$ in the
          same cluster.  The picture on the right gives the resulting
          cluster.}
\label{figure.cluster3}
\end{figure}

\item Case 1.3. In this case both edges $e_1,e_2$ have type 2 and $t_1
  = t_2$.  We denote by $s_1$ (\resp $s_2$) the child of $r$ ancestor
  of $p_1$ (\resp $p_2$).  We denote by $\T_0$ the tree rooted at $r$
  obtained by the potential flip of $(t_1,b_1,p_1)$ and the red edge
  going out of $t_1$, keeping only $r$ and the subtree rooted at the
  child $s_0$. The nodes of the tree $\T_0$ rooted at $r$ are
  represented in salmon in the left part of
  Fig.~\ref{figure.cluster4}.  This step again needs a computation of
  the height of $\T_0$ explained in the complexity issue. Case 1.3
  occurs when $\rho > 1$, $k=1$ and $t_1=r$.  In the particular case
  where the length of $C$ is $1$, the tree $\T_0$ is reduced to the node
  $r$ (it corresponds to the Case 1.3.2 below).
\begin{itemize}
\item Case 1.3.1. If the height of $\T_0$ is greater than the height of $\T$,
  we flip $(t_1,b_1,p_1)$ and the red edge going out of
  $t_1$. The equivalent automaton satisfies
  Condition~$\C$. 
\item Case 1.3.2. If the height of $\T_0$ is less than the height of
  $\T$, we flip $(t_1,b_1,p_1)$ and the red edge going
  out of $t_1$. We then call again the procedure
  \textsc{FlipEdges}$(\A,r)$ with this new red cycle. This time the
  (new) tree $\T_0$ has the same height as $\T$.  Hence this
  call is done at most one time for a given maximal root $r$. 
\item Case 1.3.3. Finally, we consider the case where the heights of $\T$ and
$\T_0$ are equal (see the left part of Fig.~\ref{figure.cluster4}).
\begin{itemize}
\item Case 1.3.3.1.
If the set of outgoing edges of $s_0$ is a bunch and there is a state
$s_i \in S(r)$ whose set of outgoing edges is also a
bunch, we get a trivial stable pair $(s_0,s_i)$. 
\item Case 1.3.3.2.  If the set of outgoing edges of $s_0$ is a bunch
  and, for any state $s \in S(r)$, the set of outgoing edges of $s$ is
  not a bunch (as in the left part of Fig.~\ref{figure.cluster4}), we
  flip $(t_1,b_1,p_1)$ and the red edge going out of $t_1$. The (new)
  tree $\T_0$ (obtained by the potential flip of $(t_2,b_2,p_2)$ and
  the red edge going out of $t_1$, keeping only $r$ and the subtree
  rooted at the child $s_1$) has the same height as $\T$.  We then
  call again the procedure \textsc{FlipEdges}$(\A,r)$ with this new
  red cycle. This time the height of the new tree $\T_0$ is still
  equal to the height of $\T$ and the set of outgoing edges of the
  predecessor of $r$ on the cycle is not a bunch. This call is thus
  performed at most one time.
\item Case 1.3.3.3.
If the set of outgoing edges of $s_0$ is not a bunch, let
$(s_0,b_0,q_0)$ be a $b$-edge going out of $s_0$ with $q_0 \neq r$.
If $q_0$ does not belong to $\T$, we get an equivalent automaton
satisfying Condition~$\C$ by flipping $(s_0,b_0,q_0)$ and the red edge
going out of $s_0$. If $q_0$ belongs to $\T$, we flip $(s_0,b_0,q_0)$
and the red edge going out of $s_0$. We also flip $(t_1,b_1,p_1)$ and
the red edge going out of $t_1$ if $q_0$ is not a descendant of $s_1$,
or $(t_1,b_2,p_2)$ and the red edge going out of $t_1$, in the opposite
case. Note that $s_0 \neq t_1$ since the height of $\T_0$ is equal to
the non-null height of $\T$. We get an equivalent automaton satisfying
Condition~$\C$ (see the right part of Fig.~\ref{figure.cluster5}).
\end{itemize}
\end{itemize}
\end{itemize}

\begin{figure}[htbp]
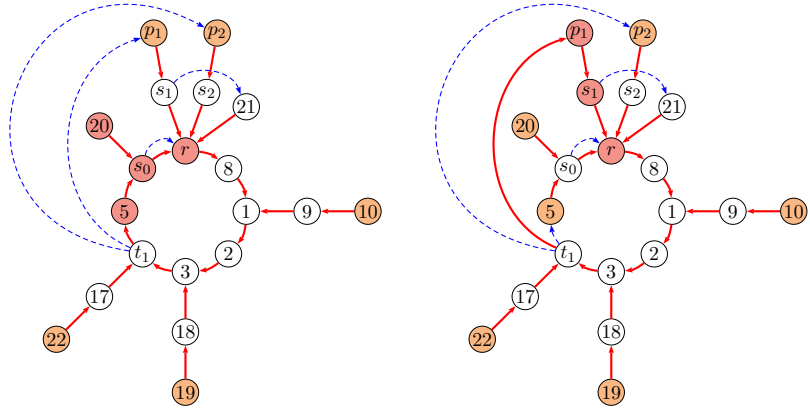

    \centering
\VCDraw{%
\begin{VCPicture}{(0,-1)(12,12)}
\MediumState
\VCPut{(0,6)}{
\State[1]{(2,0)}{1}%
\State[9]{(4,0)}{9}%
\ChgStateFillColor{Apricot}
\State[10]{(6,0)}{10}%
 \RstStateFillColor
\VCPut[45]{(0,0)}{%
    \State[8]{(2,0)}{2}
}
\VCPut[90]{(0,0)}{%
\ChgStateFillColor{Salmon}
    \State[r]{(2,0)}{3}
 \RstStateFillColor
}
\VCPut[60]{(0,0)}{%
    \State[21]{(4,0)}{21}%
}
\VCPut[80]{(0,0)}{%
    \State[s_2]{(4,0)}{11}%
    \ChgStateFillColor{Apricot}
    \State[p_2]{(6,0)}{12}%
    \RstStateFillColor
}
\VCPut[100]{(0,0)}{%
    \State[s_1]{(4,0)}{13}%
    \ChgStateFillColor{Apricot}
    \State[p_1]{(6,0)}{14}%
    \RstStateFillColor  
}
\VCPut[135]{(0,0)}{%
\ChgStateFillColor{Salmon}
    \State[s_0]{(2,0)}{4}
    \State[20]{(4,0)}{20}%
    \RstStateFillColor 
}
\VCPut[180]{(0,0)}{%
\ChgStateFillColor{Salmon}
    \State[5]{(2,0)}{5}
 \RstStateFillColor 
}
\VCPut[225]{(0,0)}{%
    \State[t_1]{(2,0)}{6}
    \State[17]{(4,0)}{17}
    \ChgStateFillColor{Apricot}   
     \State[22]{(6,0)}{22}
    \RstStateFillColor
}
\VCPut[270]{(0,0)}{%
    \State[3]{(2,0)}{7}
    \State[18]{(4,0)}{18}%
    \ChgStateFillColor{Apricot}
    \State[19]{(6,0)}{19}%
    \RstStateFillColor 
}
\VCPut[315]{(0,0)}{%
    \State[2]{(2,0)}{8}}
\ChgEdgeLineWidth{2}
\ChgEdgeLineColor{red}
\ChgEdgeLineStyle{solid}
\ArcL[.5]{1}{8}{}
\ArcL[.5]{2}{1}{}
\ArcL[.5]{3}{2}{}
\ArcL[.5]{4}{3}{}
\ArcL[.5]{5}{4}{}
\ArcL[.5]{6}{5}{}
\ArcL[.5]{7}{6}{}
\ArcL[.5]{8}{7}{}
\EdgeL[.5]{10}{9}{}
\EdgeL[.5]{9}{1}{}
\EdgeL[.5]{12}{11}{}
\EdgeL[.5]{11}{3}{}
\EdgeL[.5]{14}{13}{}
\EdgeL[.5]{13}{3}{}
\EdgeL[.5]{17}{6}{}
\EdgeL[.5]{19}{18}{}
\EdgeL[.5]{18}{7}{}
\EdgeL[.5]{20}{4}{}
\EdgeL[.5]{22}{17}{}
\EdgeL[.5]{21}{3}{}
\ChgEdgeLineColor{blue}
\ChgEdgeLineWidth{1}
\ChgEdgeLineStyle{dashed}
\VArcR[.5]{arcangle=70,ncurv=0.9}{6}{14}{}
\VArcR[.5]{arcangle=100,ncurv=1.8}{6}{12}{}
\VArcR[.5]{arcangle=60}{4}{3}{}
\VArcR[.5]{arcangle=60}{13}{21}{}
}
\VCPut{(14,6)}{

\State[1]{(2,0)}{1}%
\State[9]{(4,0)}{9}%
    \ChgStateFillColor{Apricot}
\State[10]{(6,0)}{10}%
    \RstStateFillColor 
\VCPut[45]{(0,0)}{%
    \State[8]{(2,0)}{2}
}
\VCPut[90]{(0,0)}{%
\ChgStateFillColor{Salmon}
    \State[r]{(2,0)}{3}
    \RstStateFillColor 
}
\VCPut[60]{(0,0)}{%
    \State[21]{(4,0)}{21}%
}
\VCPut[80]{(0,0)}{%
    \State[s_2]{(4,0)}{11}%
    \ChgStateFillColor{Apricot}
    \State[p_2]{(6,0)}{12}%
    \RstStateFillColor 
}
\VCPut[100]{(0,0)}{%
\ChgStateFillColor{Salmon}
    \State[s_1]{(4,0)}{13}%
    \State[p_1]{(6,0)}{14}%
    \RstStateFillColor 
}
\VCPut[135]{(0,0)}{%
    \State[s_0]{(2,0)}{4}
    \ChgStateFillColor{Apricot}
    \State[20]{(4,0)}{20}%
    \RstStateFillColor 
}
\VCPut[180]{(0,0)}{%
    \ChgStateFillColor{Apricot}  
     \State[5]{(2,0)}{5}  
    \RstStateFillColor 
}
\VCPut[225]{(0,0)}{%
    \State[t_1]{(2,0)}{6}
    \State[17]{(4,0)}{17}
    \ChgStateFillColor{Apricot}
     \State[22]{(6,0)}{22}
    \RstStateFillColor 
}
\VCPut[270]{(0,0)}{%
    \State[3]{(2,0)}{7}
    \State[18]{(4,0)}{18}%
    \ChgStateFillColor{Apricot}
    \State[19]{(6,0)}{19}%
    \RstStateFillColor 
}
\VCPut[315]{(0,0)}{%
    \State[2]{(2,0)}{8}}
\ChgEdgeLineWidth{2}
\ChgEdgeLineColor{red}
\ChgEdgeLineStyle{solid}
\ArcL[.5]{1}{8}{}
\ArcL[.5]{2}{1}{}
\ArcL[.5]{3}{2}{}
\ArcL[.5]{7}{6}{}
\ArcL[.5]{8}{7}{}
\EdgeL[.5]{10}{9}{}
\EdgeL[.5]{9}{1}{}
\EdgeL[.5]{12}{11}{}
\EdgeL[.5]{11}{3}{}
\EdgeL[.5]{14}{13}{}
\EdgeL[.5]{13}{3}{}
\EdgeL[.5]{17}{6}{}
\EdgeL[.5]{19}{18}{}
\EdgeL[.5]{18}{7}{}
\EdgeL[.5]{20}{4}{}
\EdgeL[.5]{22}{17}{}
\EdgeL[.5]{21}{3}{}
\ArcL[.5]{5}{4}{}
\VArcR[.5]{arcangle=70,ncurv=0.9}{6}{14}{}
\ArcL[.5]{4}{3}{}
\ChgEdgeLineColor{blue}
\ChgEdgeLineWidth{1}
\ChgEdgeLineStyle{dashed}
\ArcL[.5]{6}{5}{}
\VArcR[.5]{arcangle=100,ncurv=1.8}{6}{12}{}
\VArcR[.5]{arcangle=60}{4}{3}{}
\VArcR[.5]{arcangle=60}{13}{21}{}
}
\end{VCPicture}%
        }
        \caption{ The picture on the left illustrates Case 1.3.3.2 of
          the main treatment. The two edges $(t_1,b_1,p_1)$ and
          $(t_1,b_2,p_2)$ are of type 2.  Let $\T_0$ be the tree
          rooted at $r$ obtained by the potential flip of
          $(t_1,b_1,p_1)$ and the red edge going out of $t_1$, keeping
          only $r$ and the subtree rooted at the child $s_0$. The
          nodes of the tree $\T_0$ rooted at $r$ are represented in
          salmon in the left part of the figure. The state $s_0$ is a
          bunch. After flipping the edge $(t_1,b_1,p_1)$ and the red
          edge going out of $t_1$, we get the automaton pictured in
          the right part of the figure.  The tree $\T'_0$ is now tree
          rooted at $r$ obtained by the potential flip of
          $(t_1,b_2,p_2)$ and the red edge going out of $t_1$, keeping
          only $r$ and the subtree rooted at the child $s_1$. Its
          states are colored in salmon. The height of $T'_0$ is 2.}
\label{figure.cluster4}
\end{figure}


\begin{figure}[htbp]
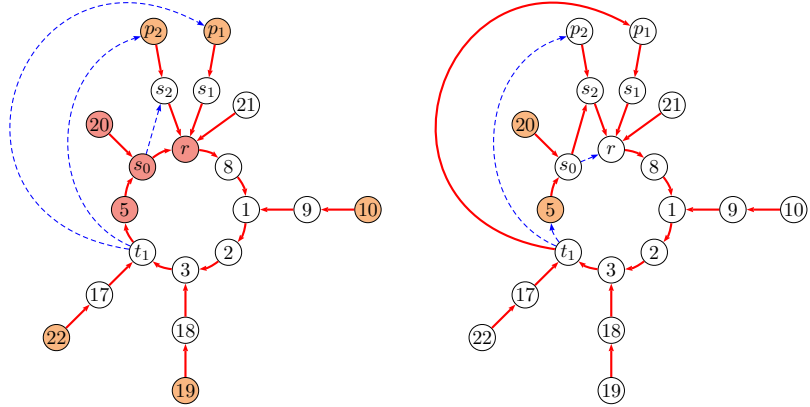

    \centering
\VCDraw{%
\begin{VCPicture}{(0,-1)(12,12)}
\MediumState
\VCPut{(0,6)}{
\State[1]{(2,0)}{1}%
\State[9]{(4,0)}{9}%
\ChgStateFillColor{Apricot}
\State[10]{(6,0)}{10}%
 \RstStateFillColor
\VCPut[45]{(0,0)}{%
    \State[8]{(2,0)}{2}
}
\VCPut[90]{(0,0)}{%
\ChgStateFillColor{Salmon}
    \State[r]{(2,0)}{3}
 \RstStateFillColor
}
\VCPut[60]{(0,0)}{%
    \State[21]{(4,0)}{21}%
}
\VCPut[80]{(0,0)}{%
    \State[s_1]{(4,0)}{11}%
    \ChgStateFillColor{Apricot}
    \State[p_1]{(6,0)}{12}%
    \RstStateFillColor
}
\VCPut[100]{(0,0)}{%
    \State[s_2]{(4,0)}{13}%
    \ChgStateFillColor{Apricot}
    \State[p_2]{(6,0)}{14}%
    \RstStateFillColor  
}
\VCPut[135]{(0,0)}{%
\ChgStateFillColor{Salmon}
    \State[s_0]{(2,0)}{4}
    \State[20]{(4,0)}{20}%
    \RstStateFillColor 
}
\VCPut[180]{(0,0)}{%
\ChgStateFillColor{Salmon}
    \State[5]{(2,0)}{5}
 \RstStateFillColor 
}
\VCPut[225]{(0,0)}{%
    \State[t_1]{(2,0)}{6}
    \State[17]{(4,0)}{17}
    \ChgStateFillColor{Apricot}   
     \State[22]{(6,0)}{22}
    \RstStateFillColor
}
\VCPut[270]{(0,0)}{%
    \State[3]{(2,0)}{7}
    \State[18]{(4,0)}{18}%
    \ChgStateFillColor{Apricot}
    \State[19]{(6,0)}{19}%
    \RstStateFillColor 
}
\VCPut[315]{(0,0)}{%
    \State[2]{(2,0)}{8}}
\ChgEdgeLineWidth{2}
\ChgEdgeLineColor{red}
\ChgEdgeLineStyle{solid}
\ArcL[.5]{1}{8}{}
\ArcL[.5]{2}{1}{}
\ArcL[.5]{3}{2}{}
\ArcL[.5]{4}{3}{}
\ArcL[.5]{5}{4}{}
\ArcL[.5]{6}{5}{}
\ArcL[.5]{7}{6}{}
\ArcL[.5]{8}{7}{}
\EdgeL[.5]{10}{9}{}
\EdgeL[.5]{9}{1}{}
\EdgeL[.5]{12}{11}{}
\EdgeL[.5]{11}{3}{}
\EdgeL[.5]{14}{13}{}
\EdgeL[.5]{13}{3}{}
\EdgeL[.5]{17}{6}{}
\EdgeL[.5]{19}{18}{}
\EdgeL[.5]{18}{7}{}
\EdgeL[.5]{20}{4}{}
\EdgeL[.5]{22}{17}{}
\EdgeL[.5]{21}{3}{}
\ChgEdgeLineColor{blue}
\ChgEdgeLineWidth{1}
\ChgEdgeLineStyle{dashed}
\VArcR[.5]{arcangle=70,ncurv=0.9}{6}{14}{}
\VArcR[.5]{arcangle=100,ncurv=1.8}{6}{12}{}
\EdgeL[.5]{4}{13}{}
}
\VCPut{(14,6)}{

\State[1]{(2,0)}{1}%
\State[9]{(4,0)}{9}%
\State[10]{(6,0)}{10}%
\VCPut[45]{(0,0)}{%
    \State[8]{(2,0)}{2}
}
\VCPut[90]{(0,0)}{%
    \State[r]{(2,0)}{3}
}
\VCPut[60]{(0,0)}{%
    \State[21]{(4,0)}{21}%
}
\VCPut[80]{(0,0)}{%
    \State[s_1]{(4,0)}{11}%
    \State[p_1]{(6,0)}{12}%
}
\VCPut[100]{(0,0)}{%
    \State[s_2]{(4,0)}{13}%
    \State[p_2]{(6,0)}{14}%
}
\VCPut[135]{(0,0)}{%
    \State[s_0]{(2,0)}{4}
    \ChgStateFillColor{Apricot}
    \State[20]{(4,0)}{20}%
    \RstStateFillColor 
}
\VCPut[180]{(0,0)}{%
    \ChgStateFillColor{Apricot}  
     \State[5]{(2,0)}{5}  
    \RstStateFillColor 
}
\VCPut[225]{(0,0)}{%
    \State[t_1]{(2,0)}{6}
    \State[17]{(4,0)}{17}
     \State[22]{(6,0)}{22}
}
\VCPut[270]{(0,0)}{%
    \State[3]{(2,0)}{7}
    \State[18]{(4,0)}{18}%
    \State[19]{(6,0)}{19}%
}
\VCPut[315]{(0,0)}{%
    \State[2]{(2,0)}{8}}
\ChgEdgeLineWidth{2}
\ChgEdgeLineColor{red}
\ChgEdgeLineStyle{solid}
\ArcL[.5]{1}{8}{}
\ArcL[.5]{2}{1}{}
\ArcL[.5]{3}{2}{}
\ArcL[.5]{7}{6}{}
\ArcL[.5]{8}{7}{}
\EdgeL[.5]{10}{9}{}
\EdgeL[.5]{9}{1}{}
\EdgeL[.5]{12}{11}{}
\EdgeL[.5]{11}{3}{}
\EdgeL[.5]{14}{13}{}
\EdgeL[.5]{13}{3}{}
\EdgeL[.5]{17}{6}{}
\EdgeL[.5]{19}{18}{}
\EdgeL[.5]{18}{7}{}
\EdgeL[.5]{20}{4}{}
\EdgeL[.5]{22}{17}{}
\EdgeL[.5]{21}{3}{}
\ArcL[.5]{5}{4}{}
\EdgeL[.5]{4}{13}{}
\VArcR[.5]{arcangle=100,ncurv=1.8}{6}{12}{}
\ChgEdgeLineColor{blue}
\ChgEdgeLineWidth{1}
\ChgEdgeLineStyle{dashed}
\ArcL[.5]{6}{5}{}
\VArcR[.5]{arcangle=70,ncurv=0.9}{6}{14}{}
\EdgeL[.5]{4}{3}{}
}
\end{VCPicture}%
        }
        \caption{ The picture on the left illustrates Case 1.3.3.3. The
          two edges $(t_1,b_1,p_1)$ and $(t_1,b_2,p_2)$ are of type 2.
          Let $\T_0$ be the tree rooted at $r$ obtained by the
          potential flip of $(t_1,b_1,p_1)$ and the red edge going out
          of $t_1$, keeping only $r$ and the subtree rooted at the
          child $s_0$. The nodes of the tree $\T_0$ rooted at $r$ are
          represented in salmon in the left part of the figure.  The
          state $s_0$ is not a bunch: it has a $b$-edge
          $(s_0,b_0,q_0)$ with $q_0=s_2$. After flipping the edge
          $(t_1,b_1,p_1)$ and the red edge going out of $t_1$, and
          flipping $(s_0,b_0,q_0)$ and the red edge going out of
          $s_0$, we get a unique maximal tree rooted at $r$ in the
          same cluster (see the right part of the figure). }
\label{figure.cluster5}
\end{figure}

\item Case 2. We now come to the case where the value returned by
  \textsc{FindEdges}$(r)$ is a pair $(2,e)$ with $e=(t_1,b_1,p_1)$ of
  type 1 or 2, and $\T$ is modified in such a way that 
 $r$ has a unique maximal child, \ie $\rho = 1$.
\begin{itemize}
\item Case 2.1.  If $(t_1,b_1,p_1)$ has type 1, we flip the edge
  $(t_1,b_1,p_1)$ and the red edge going out of~$t_1$. We get an
  equivalent automaton satisfying Condition~$\C$.
\item Case 2.2.  If $(t_1,b_1,p_1)$ has type 2, we denote by $\T_0$
  the tree rooted at $r$ obtained by the potential flip of
  $(t_1,b_1,p_1)$ and the red edge going out of $t_1$, keeping only
  $r$ and the subtree rooted at the child $s_0$.  Case 2.2 occurs
  when $\rho=1$, $k=1$ and $t_1=r$. In the particular case where the
  length of $C$ is $1$, $\T_0$ is reduced to the node $r$ which
  corresponds to the Case 2.2.2 below.
\begin{itemize}
\item Case 2.2.1.  If the height of $\T_0$ is greater than the height
  of $\T$, we do the flip and the equivalent automaton satisfies
  Condition~$\C$.  
\item Case 2.2.2.  If the height of $\T_0$ is less than the height of
  $\T$, we do not do the flip, and return the automaton together with
  the edge $(t_1,b_1,p_1)$. Note that a possible future flip of
  $(t_1,b_1,p_1)$ and the red edge starting at $t_1$ makes the root
  $r$ not maximal anymore.
\item Case 2.2.3.  
We now come to the case where the height
  of $\T_0$ is equal to the height of $\T$.  
\begin{itemize}
\item Case 2.2.3.1.  If the set of outgoing edges of $s_0$ and $s_1$
  are bunches, there is a trivial stable pair $(s_0,s_1)$.
\item Case 2.2.3.2.
If the set of outgoing edges of $s_0$ is a bunch and
  the set of outgoing edges of $s_1$ is not a bunch (see the left
  part of Fig.~\ref{figure.cluster6}), we flip the edge
  $(t_1,b_1,p_1)$ and the red edge going out of $t_1$.  We then call
  the procedure \textsc{FlipEdges}$(\A,r)$ with this new red
  cycle. The root $r$ has now a unique child ($s_1$) ancestor of
  maximal state whose set of outgoing edges is a bunch (see the right
  part of Fig.~\ref{figure.cluster6}).  This call is thus performed
  at most one time.  
\item Case 2.2.3.3.
Finally, if $s_0$ is a not a bunch, let
  $(s_0,b_0,q_0)$ be a $b$-edge with $q_0 \neq r$.  If $q_0$ does not
  belong to $\T$ we flip the edge $(s_0,b_0,q_0)$ and the red edge
  going out of $s_0$.  The equivalent automaton satisfies
  Condition~$\C$. It $q_0$ belongs to $\T$ and is not a descendant of
  $s_1$, we flip the edge $(t_1,b_1,p_1)$ and the red edge going out
  of $t_1$, and we also flip the edge $(s_0,b_0,q_0)$ and the red edge
  going out of $s_0$.  The equivalent automaton satisfies
  Condition~$\C$.  If $q_0$ belongs to $\T$ and is a descendant of
  $s_1$, we return the automaton together with the edge
  $(s_0,b_0,q_0)$.
\end{itemize}
\end{itemize}
\end{itemize}


\begin{figure}[htbp]
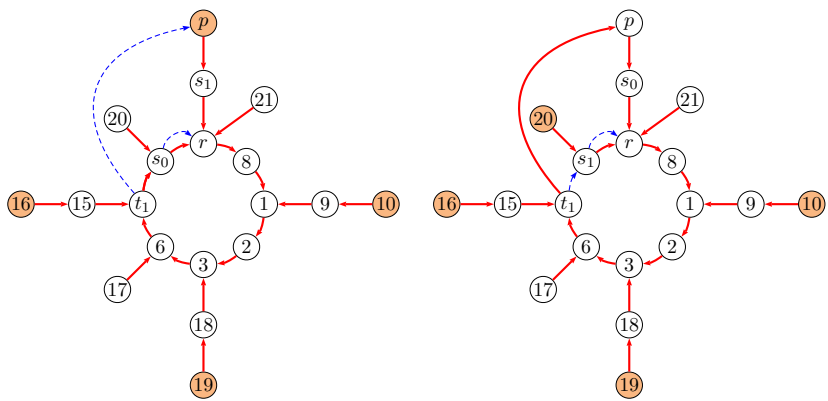

    \centering
\VCDraw{%
\begin{VCPicture}{(0,-1)(12,12)}
\MediumState
\VCPut{(0,6)}{
\State[1]{(2,0)}{1}%
\State[9]{(4,0)}{9}%
\ChgStateFillColor{Apricot}
\State[10]{(6,0)}{10}%
 \RstStateFillColor
\VCPut[45]{(0,0)}{%
    \State[8]{(2,0)}{2}
}
\VCPut[90]{(0,0)}{%
    \State[r]{(2,0)}{3}
}
\VCPut[60]{(0,0)}{%
    \State[21]{(4,0)}{21}%
}
\VCPut[90]{(0,0)}{%
    \State[s_1]{(4,0)}{13}%
    \ChgStateFillColor{Apricot}
    \State[p]{(6,0)}{14}%
    \RstStateFillColor
}
\VCPut[135]{(0,0)}{%
    \State[s_0]{(2,0)}{4}
    \State[20]{(4,0)}{new}}
\VCPut[180]{(0,0)}{%
    \State[t_1]{(2,0)}{5}
    \State[15]{(4,0)}{15}%
    \ChgStateFillColor{Apricot}
    \State[16]{(6,0)}{16}%
    \RstStateFillColor
}
\VCPut[225]{(0,0)}{%
    \State[6]{(2,0)}{6}
    \State[17]{(4,0)}{17}
}
\VCPut[270]{(0,0)}{%
    \State[3]{(2,0)}{7}
    \State[18]{(4,0)}{18}%
   \ChgStateFillColor{Apricot}
    \State[19]{(6,0)}{19}%
    \RstStateFillColor
}
\VCPut[315]{(0,0)}{%
    \State[2]{(2,0)}{8}}
\ChgEdgeLineWidth{2}
\ChgEdgeLineColor{red}
\ChgEdgeLineStyle{solid}
\ArcL[.5]{1}{8}{}
\ArcL[.5]{2}{1}{}
\ArcL[.5]{3}{2}{}
\ArcL[.5]{4}{3}{}
\ArcL[.5]{7}{6}{}
\ArcL[.5]{8}{7}{}
\EdgeL[.5]{10}{9}{}
\EdgeL[.5]{9}{1}{}
\EdgeL[.5]{14}{13}{}
\EdgeL[.5]{16}{15}{}
\EdgeL[.5]{15}{5}{}
\EdgeL[.5]{17}{6}{}
\EdgeL[.5]{19}{18}{}
\EdgeL[.5]{18}{7}{}
\ArcL[.5]{6}{5}{}
\ArcL[.5]{5}{4}{}
\EdgeL[.5]{21}{3}{}
\EdgeL[.5]{13}{3}{}
\ArcL[.5]{5}{4}{}
\EdgeL[.5]{new}{4}{}
\ChgEdgeLineColor{blue}
\ChgEdgeLineWidth{1}
\ChgEdgeLineStyle{dashed}
\VArcR[.5]{arcangle=60,ncurv=1.2}{5}{14}{}
\VArcR[.5]{arcangle=60}{4}{3}{}
}
\VCPut{(14,6)}{
\State[1]{(2,0)}{1}%
\State[9]{(4,0)}{9}%
\ChgStateFillColor{Apricot}
\State[10]{(6,0)}{10}%
 \RstStateFillColor
\VCPut[45]{(0,0)}{%
    \State[8]{(2,0)}{2}
}
\VCPut[90]{(0,0)}{%
    \State[r]{(2,0)}{3}
}
\VCPut[60]{(0,0)}{%
    \State[21]{(4,0)}{21}%
}
\VCPut[90]{(0,0)}{%
    \State[s_0]{(4,0)}{13}%
    \State[p]{(6,0)}{14}%
}
\VCPut[135]{(0,0)}{%
    \State[s_1]{(2,0)}{4}
     \ChgStateFillColor{Apricot}
    \State[20]{(4,0)}{new}
      \RstStateFillColor
}
\VCPut[180]{(0,0)}{%
    \State[t_1]{(2,0)}{5}
    \State[15]{(4,0)}{15}%
    \ChgStateFillColor{Apricot}
    \State[16]{(6,0)}{16}%
    \RstStateFillColor
}
\VCPut[225]{(0,0)}{%
    \State[6]{(2,0)}{6}
    \State[17]{(4,0)}{17}
}
\VCPut[270]{(0,0)}{%
    \State[3]{(2,0)}{7}
    \State[18]{(4,0)}{18}%
   \ChgStateFillColor{Apricot}
    \State[19]{(6,0)}{19}%
    \RstStateFillColor
}
\VCPut[315]{(0,0)}{%
    \State[2]{(2,0)}{8}}
\ChgEdgeLineWidth{2}
\ChgEdgeLineColor{red}
\ChgEdgeLineStyle{solid}
\ArcL[.5]{1}{8}{}
\ArcL[.5]{2}{1}{}
\ArcL[.5]{3}{2}{}
\ArcL[.5]{4}{3}{}
\ArcL[.5]{7}{6}{}
\ArcL[.5]{8}{7}{}
\EdgeL[.5]{10}{9}{}
\EdgeL[.5]{9}{1}{}
\EdgeL[.5]{14}{13}{}
\EdgeL[.5]{16}{15}{}
\EdgeL[.5]{15}{5}{}
\EdgeL[.5]{17}{6}{}
\EdgeL[.5]{19}{18}{}
\EdgeL[.5]{18}{7}{}
\ArcL[.5]{6}{5}{}
\EdgeL[.5]{21}{3}{}
\EdgeL[.5]{13}{3}{}
\EdgeL[.5]{new}{4}{}
\VArcR[.5]{arcangle=60,ncurv=1.2}{5}{14}{}
\ChgEdgeLineColor{blue}
\ChgEdgeLineWidth{1}
\ChgEdgeLineStyle{dashed}
\ArcL[.5]{5}{4}{}
\VArcR[.5]{arcangle=60}{4}{3}{}
}
\end{VCPicture}%
        }
        \caption{The
          picture on the left illustrates Case 2.2.3.2 of the main treatment.
          The edge $(t_1,b_1,p_1)$ has type 2.  After flipping the edge
          $(t_1,b_1,p_1)$ and the red edge going out of $t_1$, we get the
          automaton on the right part of the figure. The root $r$ has a new single child
          $s_1$ ancestor of a maximal state, whose set of outgoing
          edges is a bunch. The new tree rooted at $r$ has here the
          same level $\ell =2$ as before and \textsc{FlipEdges}$(\A,r)$ is
          called a second and last time.}
\label{figure.cluster6}
\end{figure}
\item Case 3. If the value returned by \textsc{FindEdges}$(r)$ is an
  edge $(t_1,b_1,p_1)$ of type 3 and $t_1$ is an ancestor of all maximal
  nodes of $\T$ the procedure \textsc{FlipEdges}$(\A,r)$ returns this
  edge.
\end{itemize}

After running $\textsc{FlipEdges}(\A,r)$ on all maximal roots, we
get either an automaton with a stable pair, or an automaton where each
cluster fulfills the following conditions.
\begin{itemize}
\item the root $r$ of each maximal tree has a unique maximal child;
\item for each maximal root $r$, there is an edge $(t_r,b_r,p_r)$ such
  that the potential flip of $(t_r,b_r,p_r)$ and the red edge starting at 
  $t_r$ makes the root $r$ not maximal anymore.
\end{itemize}
If the latter case, we flip the blue edge $(t_r,b_r,p_r)$ and the red
one starting at $t_r$ for all maximal roots $r$ but one. We get an
equivalent automaton which satisfies Condition~$\C$ as is shown in
Fig.~\ref{figure.cluster7}. The pseudocode for this final treatment
is given in procedure \textsc{FindStablePair}.

\begin{small}
\begin{codebox} 
\Procname{$\proc{FindStablePair }(\text{automaton } \A$)} 
\li  \If the maximal level $\ell = 0$
\li  \Then  \Return $\proc{LevelZeroFlipEdges}(\A)$
\li   \Else \For each maximal root $r$
\li        \Do $\A, S  \gets \proc{FlipEdges}(\A,r)$
\li         \If $S$ is a (stable) pair of states $(s,t)$
\li             \Then \Return $\A$, $(s,t)$
\li             \Else ($S$ is a $b$-edge $(t_r,b_r,p_r)$) set $e(r) = S$
             \End
           \End
\li      \For each maximal root $r \neq r_0$ 
\li        \Do flip the edge $e(r)$ and the red edge starting at $t_r$
           \End
\label{li:FindStablePair-fin}
\li      $s \gets$ \proc{GetPredecessor}$(r_0)$
\li      $t \gets$ the child of $r_0$ ancestor of $p_{r_0}$
\label{li:FindStablePair-fin2}
\li     \Return $\A$, $(s,t)$ 
    \End
\end{codebox}
\end{small}

\begin{figure}[htbp]
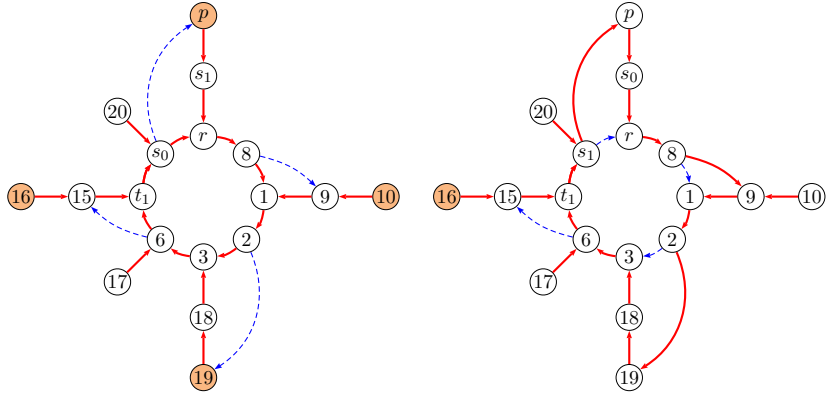

    \centering
\VCDraw{%
\begin{VCPicture}{(0,-1)(12,12)}
\MediumState
\VCPut{(0,6)}{
\State[1]{(2,0)}{1}%
\State[9]{(4,0)}{9}%
\ChgStateFillColor{Apricot}
\State[10]{(6,0)}{10}%
 \RstStateFillColor
\VCPut[45]{(0,0)}{%
    \State[8]{(2,0)}{2}
}
\VCPut[90]{(0,0)}{%
    \State[r]{(2,0)}{3}
}
\VCPut[90]{(0,0)}{%
    \State[s_1]{(4,0)}{13}%
    \ChgStateFillColor{Apricot}
    \State[p]{(6,0)}{14}%
    \RstStateFillColor
}
\VCPut[135]{(0,0)}{%
    \State[s_0]{(2,0)}{4}
    \State[20]{(4,0)}{new}}
\VCPut[180]{(0,0)}{%
    \State[t_1]{(2,0)}{5}
    \State[15]{(4,0)}{15}%
    \ChgStateFillColor{Apricot}
    \State[16]{(6,0)}{16}%
    \RstStateFillColor
}
\VCPut[225]{(0,0)}{%
    \State[6]{(2,0)}{6}
    \State[17]{(4,0)}{17}
}
\VCPut[270]{(0,0)}{%
    \State[3]{(2,0)}{7}
    \State[18]{(4,0)}{18}%
   \ChgStateFillColor{Apricot}
    \State[19]{(6,0)}{19}%
    \RstStateFillColor
}
\VCPut[315]{(0,0)}{%
    \State[2]{(2,0)}{8}}
\ChgEdgeLineWidth{2}
\ChgEdgeLineColor{red}
\ChgEdgeLineStyle{solid}
\ArcL[.5]{1}{8}{}
\ArcL[.5]{2}{1}{}
\ArcL[.5]{3}{2}{}
\ArcL[.5]{4}{3}{}
\ArcL[.5]{7}{6}{}
\ArcL[.5]{8}{7}{}
\EdgeL[.5]{10}{9}{}
\EdgeL[.5]{9}{1}{}
\EdgeL[.5]{14}{13}{}
\EdgeL[.5]{16}{15}{}
\EdgeL[.5]{15}{5}{}
\EdgeL[.5]{17}{6}{}
\EdgeL[.5]{19}{18}{}
\EdgeL[.5]{18}{7}{}
\ArcL[.5]{6}{5}{}
\ArcL[.5]{5}{4}{}
\EdgeL[.5]{13}{3}{}
\ArcL[.5]{5}{4}{}
\EdgeL[.5]{new}{4}{}
\ChgEdgeLineColor{blue}
\ChgEdgeLineWidth{1}
\ChgEdgeLineStyle{dashed}
\VArcR[.5]{arcangle=40}{4}{14}{}
\ArcL[.5]{2}{9}{}
\VArcR[.5]{arcangle=40}{8}{19}{}
\ArcL[.5]{6}{15}{}
}
\VCPut{(14,6)}{
\State[1]{(2,0)}{1}%
\State[9]{(4,0)}{9}%
\State[10]{(6,0)}{10}%
\VCPut[45]{(0,0)}{%
    \State[8]{(2,0)}{2}
}
\VCPut[90]{(0,0)}{%
    \State[r]{(2,0)}{3}
}
\VCPut[90]{(0,0)}{%
    \State[s_0]{(4,0)}{13}%
    \State[p]{(6,0)}{14}%
}
\VCPut[135]{(0,0)}{%
    \State[s_1]{(2,0)}{4}
    \State[20]{(4,0)}{new}
}
\VCPut[180]{(0,0)}{%
    \State[t_1]{(2,0)}{5}
    \State[15]{(4,0)}{15}%
    \ChgStateFillColor{Apricot}
    \State[16]{(6,0)}{16}%
    \RstStateFillColor
}
\VCPut[225]{(0,0)}{%
    \State[6]{(2,0)}{6}
    \State[17]{(4,0)}{17}
}
\VCPut[270]{(0,0)}{%
    \State[3]{(2,0)}{7}
    \State[18]{(4,0)}{18}%
    \State[19]{(6,0)}{19}%
}
\VCPut[315]{(0,0)}{%
    \State[2]{(2,0)}{8}}
\ChgEdgeLineWidth{2}
\ChgEdgeLineColor{red}
\ArcL[.5]{1}{8}{}
\ArcL[.5]{3}{2}{}
\ArcL[.5]{7}{6}{}
\EdgeL[.5]{10}{9}{}
\EdgeL[.5]{9}{1}{}
\EdgeL[.5]{14}{13}{}
\EdgeL[.5]{16}{15}{}
\EdgeL[.5]{15}{5}{}
\EdgeL[.5]{17}{6}{}
\EdgeL[.5]{19}{18}{}
\EdgeL[.5]{18}{7}{}
\ArcL[.5]{6}{5}{}
\ArcL[.5]{5}{4}{}
\EdgeL[.5]{13}{3}{}
\EdgeL[.5]{new}{4}{}
\ArcL[.5]{5}{4}{}
\VArcR[.5]{arcangle=40}{4}{14}{}
\ArcL[.5]{2}{9}{}
\VArcR[.5]{arcangle=40}{8}{19}{}
\ChgEdgeLineColor{blue}
\ChgEdgeLineWidth{1}
\ChgEdgeLineStyle{dashed}
\ArcL[.5]{6}{15}{}
\ArcL[.5]{2}{1}{}
\ArcL[.5]{4}{3}{}
\ArcL[.5]{8}{7}{}
}
\end{VCPicture}%
        }
        \caption{ The picture on the left illustrates the case where
          \textsc{FlipEdges}$(\A,r)$ has returned a $b$-edge $e(r)$
          for all maximal roots $r$.  We flip $e(r)$ and the red edge
          starting at the same state for all but one maximal root
          $r$. The new cluster is pictured on the right part of the
          figure. It has a unique maximal tree. By
          Lemma~\ref{lemma.sameTree} the pair $(6,15)$ is stable.}
\label{figure.cluster7}
\end{figure}

\pagebreak
\subsubsection{The auxiliary procedure \textsc{FindEdges}} \label{section.auxiliary}

In this section, we describe the procedure \textsc{FindEdges}$(r)$
which is a preliminary step of the procedure \textsc{FlipEdges}$(r)$.

Let $r$ be a maximal root, $S(r)$ be the set of maximal children of
$r$. For each $s$ in $S(r)$, we choose a maximal state $p$ in the
subtree rooted at $s$ and we denote by $P(r)$ the set of these maximal
states (see Fig. \ref{figure.tree}). Recall that the procedure
\textsc{FindEdges}$(r)$ flips some edges and returns an equivalent
automaton together with one or two edges of the following forms.
\begin{itemize}
\item One edge $e$ of type 0.
\item Two edges $e,f$ of type 1 or 2 ending in distinct states of
  $P(r)$.
\item One edge $e$ of type 1 or 2.  Moreover, in this case, the
  procedure modifies the tree $\T$ in such a way that $r$ has a unique
  maximal child.
\item One edge $e$ of type 3 starting at a
  state which is an ancestor of all maximal nodes of $\T$.
\end{itemize} 

For each maximal child $s$, we denote by $T_s$ the subtree of $T$
rooted at $s$. The procedure \textsc{FindEdges}$(r,s)$ computes a
list $L_{s}$ of $b$-edges $(q,b,p)$, where $p$ is a maximal node of
$T_s$ and $q$ is an ancestor of $p$ in $T$ distinct from $r$. The
starting states $q$ of edges of this list \emph{cover} the maximal
nodes of $T_s$ in the following sense: for each maximal node $p'$ in
$T_s$, there is a unique edge $(t,b,p) \in L_s$ such that $t$ is an
ancestor of $p'$ (see for instance the right part of Fig.
\ref{figure.tree}). The list $L_s$ is computed by scanning at most one
time each node of the tree $T_s$. For each maximal leaf $p$, we
follow the red edges up to $s$ and either find $s$ or an already
scanned node, or find a node with an outgoing $b$-edge ending in
$p$. In the latter case, this edge is added to $L_s$ and we continue
with another maximal leaf. In the case the list $L_s$ does not cover
all maximal nodes of $T_s$, and since the graph of the automaton
is strongly connected, the process finds an edge $(t_s,b_s,p_s)$ where
$p_s$ is a maximal node of $T_s$, of type 0, 1 or 2.

If there is a maximal child $s$ such that an edge $(t_s,b_s,p_s)$ of
type 0 is found, then \textsc{FindEdges}$(r)$ returns this
edge. 

Otherwise, if there are two maximal children $s_1 \neq s_2$ such
that two edges $(t_{s_1},b_{s_1},p_{s_1})$,
$(t_{s_2},b_{s_2},p_{s_2})$ of type 1 or 2 are found, then
\textsc{FindEdges}$(r)$ returns these two edges.  If there is a
maximal child $s_1$ such an edge $e=(t_{s_1},b_{s_1},p_{s_1})$ of type 1
or 2 and covering lists $L_s$ for the other maximal children $s \neq
s_1$ are found, then we perform the following flips. For any maximal
child $s \neq s_1$ and any edge $(t,b,p) \in L_s$, we flip the edge
$(t,b,p)$ and the red edge going out of $t$. We update the data of the
trees attached to the nodes from $p$ to $t$ in the new red cycle
created by the flip. After this transformation the node $r$ has $s_1$ as
unique maximal child. 
The procedure \textsc{FindEdges}$(r)$ returns the edge $e$ of type 
1 or 2  and $r$ has a unique maximal child.  

Finally, if one obtains covering lists for all maximal children, then,
for all these children $s$ but one, say $s_1$, we flip each edge
$(t,b,p) \in L_s$ and the red edge going out of $t$. We also flip all
edges $(t,b,p) \in L_{s_1}$ but one, $(t_1,b_1,p_1)$. We update the
data of the trees attached to the nodes from $p$ to $t$ in the new red
cycle created by each flip.  The procedure \textsc{FindEdges}$(r)$
returns the edge $(t_1,b_1,p_1)$ of type 3. Its starting state $t_1$
is disctinct from $r$ ans is an ancestor of all maximal states of $\T$.


\section{The complexity issue} \label{section.complexity}

In this section, we establish the time and space complexity of our
algorithm. We denote by $k$ the size of the alphabet $A$ and by $n$
the number of states of~$\A$.  Since $\A$ is complete deterministic,
it has exactly $kn$ edges.

\subsection{Data structures and their updating} \label{section.update}

Some data attached to the states is useful to obtain the claimed
complexity.  This data is updated after the computation of each
quotient automaton with the procedure $\textsc{Update}$ with a time
complexity which is linear in the size of the quotient automaton.

The edges of the automaton can be stored in tables indexed by the
states and labels. The updating procedure computes the level of each
state, the root of its tree in its cluster.  It also computes a list
of maximal roots and the predecessor of a state on the cycle.  The
function \textsc{GetPredecessor}$(q)$ returns the predecessor of state
$q$ on its red cycle in constant time.

One computes 
\begin{itemize}
\item for each root of a tree $T$, the height of $T$,
\item for each maximal root, the list of its maximal children,
\item for each maximal child, the list of the maximal nodes belonging
  to the subtree rooted at this child.
\end{itemize}
This data can be moreover updated in time linear in the size of the
tree.

We also maintain an inverse structure of the quotient
automaton. Giving a label $c$ and a state $q$, it gives, for each
letter $c$, an unordered list of states $p$ such that there is an edge
$(p,c,q)$ in the quotient automaton. The procedure
\textsc{Flip}$(e,f)$ exchanges the labels of the two edges
$e=(p,b,q),f=(p,a,q')$. It also updates in the inverse structure the
lists of edges coming in $p$ and $p'$. Its time complexity is thus
upper bounded by the number of edges going out of $p,p'$ or coming in
$p,p'$.

\subsection{Complexity of the algorithm} 

\begin{proposition} \label{proposition.complexity}
The worst-case complexity of \textsc{FindSynchronizedColoring}
applied to an $n$-state aperiodic automaton is $O(kn^2)$
in time, and $O(kn)$ in space.
\end{proposition}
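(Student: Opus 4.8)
The plan is to analyze the two nested loops separately: the outer loop in \textsc{FindColoring} (lines~\ref{li:FindColoring-1}--\ref{li:FindColoring-3}) and, inside each iteration, the work done by \textsc{FindStablePair} together with the cost of forming the quotient and lifting the coloring. First I would argue that each call to \textsc{FindStablePair} runs in time $O(kn)$ on an automaton with at most $n$ states and $kn$ edges. This is the heart of the matter and subsumes several sub-claims: computing levels, roots, the $\cluster$/$\cycle$/$\num$ arrays, and the \emph{previous} table by a single depth-first traversal of the red subgraph $\R$ is $O(n)$; the preliminary \textsc{ScanTrees} phase (\textsc{Exploration}, the counts $n(t)$, and the whole sequence of flips with the running update of $M$) is $O(kn)$ because each flip creates a new cluster whose states strictly drop in level, so the flips can be charged to a potential bounded by $\sum_p \level[p] = O(n)$, never recomputed globally; and the main loop over maximal roots, calling \textsc{FlipEdges}$(\A,r)$, is $O(kn)$ in total. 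For the last point I would note that \textsc{FlipEdges} either returns immediately after $O(1)$ flips plus a height/size computation, or recurses once on the same root $r$ with a strictly improved invariant (height of $\T_0$ now equal to height of $\T$, or $\rho$ decremented, or the predecessor of $r$ ceasing to be a bunch), so each maximal root triggers at most a constant number of \textsc{FlipEdges} calls; each flip and each lookup is $O(1)$ given the precomputed tables, and a flip touches $O(k)$ edges of one state.

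The one genuinely delicate cost is the repeated computation of the height (or size) of the hypothetical trees $\T'$ and $\T_0$ referenced throughout \textsc{FlipEdges}, \textsc{FlipEdgesChild} and \textsc{FlipEdgesChildren}; this is exactly the ``complexity issue'' the body of the paper defers to this section. I would spell out that $\T_0$ is obtained from the current cluster by redirecting a single red edge out of $t$ (or $s_0$), so its vertex set and its height can be read off from the already-computed levels together with the position $\num[t]$ on the red cycle $\C$ of length $k_\C$: the subtree hanging at $s_0$ acquires, on top of its old levels, an additive offset equal to the distance from $s_0$ back to $t$ along $\C$ (namely $(\num[s_0]-\num[t]) \bmod k_\C$ plus one), so $\height[\T_0]$ is the old height of that subtree plus this offset, computable in $O(1)$. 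The subtleties are that we must only compare heights, never materialize the tree, and that after a flip we must keep the tables consistent cheaply; I would point out that \textsc{FlipEdges} explicitly flags every update, that a flip only affects one cluster, and that — since each maximal root is processed a bounded number of times and a flip decreases a bounded nonnegative potential (sum of levels restricted to the affected cluster) — the total reconstruction cost over the whole run of \textsc{FindStablePair} is $O(kn)$.

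Next I would bound the work outside \textsc{FindStablePair} within one outer iteration: lifting the coloring up to $\A_0$ costs $O(kn)$ (one pass over the edges, comparing with $\B'$ and performing the flips as in Lemma~\ref{lemma.lift}), and forming the quotient by the congruence generated by $(s,t)$ using \textsc{Merge} costs $O(kn)$ amortized via \textsc{Union}/\textsc{Find}, since each successful \textsc{Union} reduces the number of classes, there are at most $n$ of them, and each one spawns $k$ recursive calls. Hence one outer iteration is $O(kn)$. The main obstacle to a clean bound on the whole algorithm is that after quotienting, the automaton $\A$ passed to the next iteration is smaller but we keep lifting onto the original $\A_0$ of $n$ states, so each of the outer iterations costs $\Theta(kn)$ regardless of the current size; since the quotient strictly decreases the state count, there are at most $n$ iterations, giving the claimed $O(kn^2)$ time. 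For space, I would observe that at every point we store only $O(1)$ automata of size $O(kn)$ plus the $O(n)$ auxiliary arrays (levels, roots, \emph{previous}, cluster/cycle numbers, the $n(t)$ counters, the $e[r]$ table) and the $O(n)$ \textsc{Union}/\textsc{Find} structure, and the recursion depth of \textsc{Merge} and of \textsc{FlipEdges} is $O(n)$, so the total space is $O(kn)$, completing the proof.
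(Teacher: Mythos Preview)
Your overall decomposition matches the paper's: show that each call to \textsc{FindStablePair} is $O(kn)$, that lifting and quotienting contribute $O(kn)$ per outer iteration (hence $O(kn^2)$ in total since the quotient strictly shrinks), and that only $O(kn)$ space is ever held. Most of the sub-arguments are fine.

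The genuine gap is your claim that $\height[\T_0]$ (and likewise $\height[\T']$) can be read off in $O(1)$ as ``the old height of the subtree hanging at $s_0$ plus an offset''. After the flip of the red edge out of $t$, the new child $s_0$ of $r$ does not carry only its former tree: the entire arc of the old cycle from the successor of $t$ up to $s_0$, together with every tree that was attached along that arc, becomes part of the subtree rooted at~$s_0$. Consequently
\[
\height[\T_0]\;=\;\max_{x\,\in\,]t \ldots r[}\bigl(\height[x]+\num[r]-\num[x]\bigr),
\]
a maximum over all cycle vertices in the interval, not a single term. This is not an $O(1)$ computation, and your formula involving only one subtree and one additive offset is simply wrong.

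The paper's fix is not to make this step cheaper per root but to amortize across roots: since every $t_i$ lies in $[r_k \ldots r[$, the interval scanned when processing a maximal root $r$ is contained in $]r_k \ldots r]$, and as $r$ ranges over the maximal roots these intervals are pairwise disjoint. Hence the total cost of all height computations inside \textsc{FlipEdges}, \textsc{FlipEdgesChild} and \textsc{FlipEdgesChildren} is bounded by the number of states, $O(n)$. The same disjointness bounds the data updates needed before the (at most constant number of) recursive calls at a given $r$: only states with root in $]r_k \ldots r]$ need refreshing, and $\num$ is never recomputed globally---it suffices to mark which states now lie on the new cycle. Without this disjointness argument you have not established that \textsc{FindStablePair} runs in $O(kn)$; your potential $\sum_p \level[p]$ controls level drops caused by flips but does not bound the lengths of the scans performed to compute $\height[\T_0]$.
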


\begin{proof} 
  The complexity of \textsc{FindSynchronizedColoring} is at most $n$ times the
  complexity of the procedures \textsc{Update} and
  \textsc{FindStablePair}.  Indeed, each call in the procedure
  \textsc{Merge} reduces the number of states of the automaton so that
  it is called at most $n-1$ times. Since each of its steps without the
  recursive calls takes a time at most $kn$, the contribution of
  \textsc{Merge} in \textsc{FindSynchronizedColoring} is at most $kn$.  As the
  procedure \textsc{Update} has a time complexity $O(kn)$, we just
  have to show that the time complexity of
  \textsc{FindStablePair} is $O(kn)$.

  Since \textsc{LevelZeroFlipEdges} contains only one \textsc{Flip}
  call, we show that the calls to \textsc{FlipEdges}$(\A,r)$ for all
  maximal roots $r$ can be performed in time $O(kn)$.

  We first examine the complexity of the auxiliary step
  \textsc{FindEdges}$(r)$ for a given maximal root $r$.  This
  procedure requires a scan of the nodes of trees $T_s$ rooted at the
  maximal children $s$ of $r$ together with their outgoing edges.
  Since the edges contained in the lists $L_s$ have distinct target
  states in $T$, the flips of edges in $L_s$ can be performed with a
  time complexity at most $E(r)$, where $E(r)$ is the number of edges
  going out of or coming in a node of the tree $T$ rooted by $r$.
  Indeed, the update of the inverse structure for nodes in $T$ can be
  performed one time for all the flips of edges in $L_s$.  Note that
  the updating of the data after the flips is at most the size
  of~$T$. Indeed, after a flip of $(t,b,p)$ and $(t,a,p')$ only the
  nodes belonging to trees rooted at nodes along the red path from $p$
  to $t$ are updated.  As a consequence, the contribution of the
  auxiliary step in \textsc{FindStablePair} is $O(\sum_r E(r))=
  O(2kn)$.

  We now come to the complexity induced by the main treatment. We
  denote by $\Sect(r)$ the number of edges coming in or going out of a
  node belonging to the sector $J(r)$, \ie the nodes contained in a
  tree attached to a node of the cycle between $r'$ and $r$ ($r$
  included and $r'$), where $r'$ is the maximal root preceding $r$ on
  $C$. Let us compute for instance the complexity of the procedure
  \textsc{UniqueChildFlipEdges}$(\A,r,e=(t_1,b_1,p_1))$ (see
  Section~\ref{section.pseudocode}).  It contains at most two flips of
  edges ending in $T$.  The height of the tree $T_0$ is easily
  computed by scanning all nodes attached to some node of $C$ between
  $r$ and $r'$ ($r$ and $r'$ both excluded). In the case where this
  height is equal to $\ell$ and the set of outgoing edges of $s_0$ is
  a bunch, we flip the edge $e$.  We perform the procedure
  \textsc{UpDateSector}$(r, e)$ for updating the data of the nodes
  contained in the trees whose roots belong to $J(r)$.  Then we call
  a second (and last) time \textsc{FlipEdges}$(\A,r)$.  Since the time
  complexity of \textsc{UpDateSector}$(r, e)$ is at most $\Sect(r)$,
  we get that the time complexity of
  \textsc{UniqueChildFlipEdges}$(\A,r,e)$ is also
  $\Sect(r)$. Similarly, the time complexity of the procedures
  \textsc{ChildrenFlipEdgesEqual} and
  \textsc{ChildrenFlipEdgesUnEqual} is also $\Sect(r)$.

Hence the overall time spent for computing \textsc{FlipEdges}$(\A,r)$
for all maximal roots $r$ is $O(\sum_r \Sect(r))= O(2kn)$.
  The space complexity is $O(kn)$. Indeed, only linear
  additional space is needed to perform all operations.
\end{proof}

\section{The case of periodic graphs} \label{section.periodic}

Recall that the period of an automaton is the gcd of the lengths of
its cycles.  If the automaton $\A$ is an $n$-state
complete deterministic irreducible automaton which is not aperiodic,
it is not equivalent to a synchronized automaton. Nevertheless, the
previous algorithm can be modified as follows for finding an equivalent
automaton with the minimal possible rank. It has a quadratic-time
complexity.

\begin{small}
\begin{codebox}
\Procname{$\proc{PeriodicFindColoring }(\text{automaton } \A)$}
\li  $\B \gets \A$
\li  \While (size($\B$) $ > 1$) 
\li      \Do \proc{Update($\B$)}
\li       $\B, (s,t) \gets$ \proc{FindStablePair($\B$)}
\label{li:PeriodicFindColoring-3}
\li       lift the coloring up from $\B$ to the automaton $\A$ 
\li       \If there is a stable pair $(s,t)$ 
\label{li:PeriodicFindColoring-4}
\li              \Then $\B \gets$ \proc{Merge$(\B,(s,t))$}
\li              \Else \Return $\A$
          \End
     \End
\li \Return $\A$ 
\end{codebox}
\end{small}
It may happen that \textsc{FindStablePair} returns an automaton $\B$
which has no stable pair (it is made of a cycle where the set of
outgoing edges of any state is a bunch).  Lifting up this coloring
to the initial automaton $\A$ leads to a coloring of the initial
automaton whose minimal rank is equal to its period.

This result can be stated as the following theorem, which extends the Road
Coloring Theorem to the case of periodic graphs.

\begin{theorem}
  Any irreducible automaton $\A$ is equivalent to a an automaton whose
  minimal rank is the period of $\A$.
\end{theorem}
 
\begin{proof}

  Let us assume that $\A$ is equivalent to an automaton $\A'$ which
  has a stable pair $(s,t)$. Let $\B'$ be the quotient of $\A'$ by the
  congruence generated by $(s,t)$. Let $d$ be the period of $\A'$
  (equal to the period of $\A$) and $d'$ the period of $\B'$. 
  Let us show that $d = d'$.

It is clear that $d'$ divides $d$ (which we denote $d' / d$). Let
$\ell$ be the length of a path from $s$ to $s'$ in $\A'$, where $s'$
is equivalent to $s$. Since $(s,s')$ is stable, it is
synchronizable. Thus there is a word $w$ such that $s \cdot w = s'
\cdot w$. Since the automaton $\A'$ is irreducible, there is a path
labeled by some word $u$ from $s \cdot w$ to $s$. Hence $d / (\ell +
|w| + |u|)$ and $d / |(w| + |u|)$, implying $d / \ell$.  Let $\bar{s}$
be the class of $s$ and $z$ be the label of a cycle around $\bar{s}$
in $\B'$. Then there is a path in $\A'$ labeled by $z$ from $s$ to
$x$, where $x$ is equivalent to $x$. Thus $d / |z|$. It follows that
$d / d'$ and $d = d'$.

 Suppose that $\B'$ has rank $r$.  Let us show that $\A'$ also has
 rank $r$. Let $I$ be a minimal image of $\A'$ and $J$ be the set of
 classes of the states of $I$ in $\B'$. Two states of $I$ cannot
 belong to the same class since $I$ would not be minimal otherwise.
 As a consequence $I$ has the same cardinal as $J$. The set $J$ is a
 minimal image of $\B'$. Indeed, for any word $v$, the set $J \cdot v$
 is the set of classes of $I \cdot v$ which is a minimal image of
 $\A'$. Hence $|J \cdot v| = |J|$. As a consequence, $\B'$ has rank $r$.

  Let us now assume that $\A$ has no equivalent automaton which has a
  stable pair. In this case, we know that $\A$ is made of one red
  cycle where the set of edges going out of any state is a bunch. The
  rank of this automaton is equal to its period which is the
  length of the cycle. 

  Hence the procedure \textsc{PeriodicFindColoring} returns an
  automaton equivalent to $\A$ whose minimal rank is equal to its
  period.
\end{proof}

Since the modification of \textsc{FindSynchronizedColoring} into
\textsc{PeriodicFindColoring} does not change its complexity, we
obtain the following corollary.
\begin{corollary}
Procedure \textsc{PeriodicFindColoring} finds a coloring of minimal rank
for an $n$-state irreducible automaton in time $O(kn^2)$.
\end{corollary}

\section{Pseudocode} \label{section.pseudocode}
This section contains the pseudocode of some main procedures.
\subsection{Procedure \textsc{Merge}}

The computation of the congruence generated by $(s,t)$ can be
performed by using usual \textsc{Union/Find} functions computing
respectively the union of two classes and the leader of the class of a
state. After merging two classes whose leaders are $p$ and $q$
respectively, we need to merge the classes of $p \cdot
\ell$ and $q \cdot \ell$ for any $\ell \in A$.  
A pseudocode for merging classes is given in Procedure \textsc{Merge} below. 

\begin{small}
\begin{codebox}
\Procname{$\proc{Merge }(\text{automaton } \A, \text{ stable pair } (s,t))$}
\li  $x \gets \proc{Find}(s)$
\li  $y \gets \proc{Find}(t)$
\li  \If $x \neq y$ 
\li        \Then $\proc{Union}(x,y)$
\li               \For $\ell \in A$ 
\li                   \Do $\proc{Merge}(\A, (x \cdot \ell ,y \cdot \ell))$
                   \End
     \End
\li \Return $\A$
\end{codebox}
\end{small}

\subsection{Procedure \textsc{FlipEdges}}

We give below a pseudocode of the procedure
\textsc{FlipEdges}$(\A,r)$.  For each maximal root $r$, it
returns either an automaton equivalent to $\A$ together with a stable
pair, or an automaton equivalent to $\A$ together with one edge.
It performs some flips depending on the type of the edges returned by
\textsc{FindEdges}$(r)$. It calls \textsc{UniqueChildFlipEdges}$(r,e)$
in the case $r$ has a unique maximal child and $e$ is an edge of type 2
returned by \textsc{FindEdges}$(r)$.  It calls
\textsc{ChildrenFlipEdgesUnequal}$(\A,r)$ in the case $r$ has at least
two maximal children and \textsc{FindEdges}$(r)$ return a pair of
edges with distinct starting states.  It calls
\textsc{ChildrenFlipEdgesUnequal}$(\A,r)$ in the case $r$ has at least
two maximal children and \textsc{FindEdges}$(r)$ returns a pair of
edges which have the same starting state.

Recall that \textsc{GetPredecessor}$(r)$ returns the predecessor of
state $r$ on its red cycle.
\begin{small}
\begin{codebox}
\Procname{$\proc{FlipEdges}(\text{ automaton } \A, \text{ maximal root $r$})$}
\li     result $\gets \proc{FindEdges}(r)$ 
\li    \If ($r$ a unique maximal child $s_1$) and (result $\neq (3,e)$)
\li          \Then \If (result $=(0,e)$ or (result $=(2,e)$ where $e$ has type 1)
\li                     \Then \proc{Flip}$(e)$ 
\li                           \Return $\A$ and the stable pair $(s_1, \proc{GetPredecessor}(r))$
\li                     \Else (result $=(2,e)$ where $e$ has type 2)
\li                           \Return  $\proc{UniqueChildFlipEdges}(r,e)$ 
                    \End
        \End
\li     \If  ($r$ at least two maximal children) and (result $=(1,e_1,e_2)$ \\
         where $e_1 = (t_1,b_1,p_1),e_2= (t_2,b_2,p_2)$ have type 1 or 2)
\li          \Then  \If $t_1 \neq t_2$
\li                     \Then \Return  $\proc{ChildrenFlipEdgesUnequal}(r,e_1,e_2)$ 
\li                     \Else \Return  $\proc{ChildrenFlipEdgesEqual}(r,e_1,e_2)$ 
                     \End
         \End
\li      \If result $=(3,e)$ where $e$ is an edge of type 3
\li          \Then \Return $\A, e$
         \End 
\end{codebox}
\end{small}

\begin{small}
\begin{codebox}
\Procname{$\proc{UniqueChildFlipEdges }(\text{automaton } \A, 
         \text{ maximal root $r$}, \text{ edge } e=(t_1,b_1,p_1) \text{ of type 2})$}
\li      let $s_1$ be the unique child of $r$
\li      $s_0 \gets \proc{GetPredecessor}(r))$
\li       let $\T_0$ be the tree rooted at $r$ obtained by the potential flip of $e$ and the red edge 
\zi     going out of $t_1$, keeping only $r$ and the subtree rooted at the child $s_0$
\li        \If   $\height(\T_0) > \height(\T)$
\li             \Then \proc{Flip}$(t_1,b_1,p_1)$   
\li                   \Return $\A$ and the stable pair $(s_1,s_0)$ 
                \End
\li         \If  $\height(\T_0) < \height(\T)$
\li             \Then \Return $\A$ and the edge $e$
                 \End
\li         \If  $\height(\T_0) = \height(\T)$
\li             \Then \If the set of outgoing edges of $s_0$ and $s_1$ are bunches
\li                    \Then \Return $\A$ and the stable pair $(s_0,s_1)$ 
                       \End
\li                    \If the set of outgoing edges of $s_0$ is a bunch 
\zi                            and the set of outgoing edges of $s_1$ is not a bunch
\li                           \Then  $\proc{Flip}(t_1,b_1,p_1)$ 
\li                                   \proc{UpDateSector}$(r, e)$  (we still have $\height(\T_0) = \height(\T)$) 
\li                                   \Return $\proc{FlipEdges}(\A,r)$
                        \End
\li                    \If the set of outgoing edges of $s_0$ is not a bunch 
\li                          \Then let $(s_0,b,q_0)$ a $b$-edge going out of $s_0$ with $q_0 \neq r$
\li                                \If  $q_0 \notin \T$ 
\li                                    \Then $\proc{Flip}(s_0,b_0,q_0)$
\li                                         \If  the level of $q_0$ is positive
\li                                              \Then $r_0 \gets$ the root of the tree containing $q_0$
\li                                                     $s \gets \proc{GetPredecessor}(r_0)$
\li                                                     $t \gets$ the child of $r_0$ ancestor of $q_0$
\li                                                     \Return $\A$ and the stable pair $(s,t)$ 
\li                                              \Else $r_0 \gets$ the root of the tree containing $q_0$
\li                                                     $s \gets \proc{GetPredecessor}(r_0)$
\li                                                     \Return $\A$ and the stable pair $(s,s_0)$ 
                                                  \End
\li                                    \Else ($q_0 \in \T$ and $q_0 \neq r$) 
\li                                           \Return $\A$ and the edge $(s_0,b,q_0)$
                                       \End 
                            \End
           \End
\end{codebox}
\end{small}

\begin{small}
\begin{codebox}
\Procname{$\proc{ChildrenFlipEdgesEqual }(\text{automaton } \A, 
         \text{ maximal root $r$}, \text{ edges } e_1,e_2) \text{ of type } 2$}
\li      set $e_1 = (t_1,b_1,p_1)$ and $e_2 = (t_1,b_2,p_2)$
\li      $s_0 \gets \proc{GetPredecessor}(r)$
\li       let $\T_0$ be the tree rooted at $r$ obtained obtained by the potential flip of $(t_1,b_1,p_1)$
\zi and the red edge going out of $t_1$, keeping only $r$ and the subtree rooted at $s_0$
\li        \If      $\height(\T_0) > \height(\T)$
\li             \Then \proc{Flip}$(t_1,b_1,p_1)$   
\li                   \Return $\A$ and the stable pair $(s_1,s_0)$ 
                \End
\li         \If  $\height(\T_0) < \height(\T)$
\li             \Then \proc{Flip}$(t_1,b_1,p_1)$ 
\li                   \proc{UpDateSector}$(r, e_1)$ 
\li                    \Return  \proc{FlipEdges}$(\A, r)$ 
                 \End
\li        \If  $\height(\T_0) = \height(\T)$
\li              \Then \If the set of outgoing edges of $s_0$ is a bunch and there is an 
\zi     integer $i\geq 1$ such that the set of outgoing edges of $s_i$ is a bunch
\li                       \Then \Return $\A$ and the stable pair $(s_0,s_i)$ 
                          \End
\li                     \If the set of outgoing edges of $s_0$ is a bunch 
\zi                            and the sets of outgoing edges of $s_i$ for $i \geq 1$ are not bunches
\label{li:FlipEdgesChildren-cluster4-debut} 
\li                        \Then  $\proc{Flip}(t_1,b_1,p_1)$ 
\li                                        \proc{UpDateSector}$(r, e_1)$ (we still have $\height(\T_0) = \height(\T)$) 
\label{li:FlipEdgesChildren-cluster4-fin}
\li                                         \Return $\proc{FlipEdges}(\A,r)$
                           \End
\li                     \If the set of outgoing edges of $s_0$ is not a bunch 
\li                          \Then let $(s_0,b,q_0)$ a $b$-edge going out of $s_0$ with $q_0 \neq r$
\li                                \If  $q_0 \notin \T$ 
\li                                    \Then $\proc{Flip}(s_0,b_0,q_0)$
\li                                         \If  the level of $q_0$ is positive
\li                                              \Then $r_0 \gets$ the root of the tree containing $q_0$
\li                                                     $s \gets \proc{GetPredecessor}(r_0)$
\li                                                     $t \gets$ the child of $r_0$ ancestor of $q_0$
\li                                                     \Return $\A$ and the stable pair $(s,t)$ 
\li                                              \Else $r_0 \gets$ the root of the tree containing $q_0$
\li                                                     $s \gets \proc{GetPredecessor}(r_0)$
\li                                                     \Return $\A$ and the stable pair $(s,s_0)$ 
                                                  \End
\li                                    \Else ($q_0 \in \T$) 
\li                                         \If $q_0$ is not a descendant of $s_1$
\li                                          \Then  $\proc{Flip}(t_1,b_1,q_1)$
\li                                                 $\proc{Flip}(s_0,b_0,q_0)$    
\li                                                 $t \gets$ the child of $r$ ancestor of $q_0$
\li                                                 \Return $\A$ and the stable pair $(s_1,t_1)$ 
\li                                           \Else ($q_0$ is a descendant of $s_1$)
\li                                                 $\proc{Flip}(t_2,b_2,q_2)$
\li                                                 $\proc{Flip}(s_0,b_0,q_0)$    
\li                                                 \Return $\A$ and the stable pair $(s_1,s_2)$ 
                                              \End
                                       \End 
                            \End
           \End
\end{codebox}
\end{small}

The procedure \textsc{UpDateSector}$(r, e=(t_1,b_1,p_1))$ is called
after a flip of the edge $e$ and the red edge going out of $t_1$. It
updates the data of the nodes (and their trees attached to) along the
red path going from $p_1$ to $s_1$, where $s_1$ is the unique maximal
child of $r$.

\begin{small}
\begin{codebox}
\Procname{$\proc{ChildrenFlipEdgesUnequal }(\text{automaton } \A, 
         \text{ maximal root $r$}, \text{ edges } e_1,e_2)$}
\li  set $e_1 = (t_1,b_1,p_1)$ and $e_2 = (t_2,b_2,p_2)$ with $t_1 \neq t_2$
\li  \If at least one of $e_1, e_2$ (say $e_1$) has type 1 and $s_1$ is the child of $r$ ancestor of $p_1$
\li      \Then $s_0 \gets \proc{GetPredecessor}(r)$
\li            \proc{Flip}$(t_1,b_1,p_1)$ 
\li            \Return $\A$ and the stable pair $(s_0,s_1)$
\li       \Else \proc{Flip}$(t_1,b_1,p_1)$ 
\li             let $\T'$ be the new tree rooted at $r$
\li             \If   $\height(\T') > \height(\T)$ 
\label{li:FlipEdgesChild-calcul}
\li                        \Then \Return the stable pair $(s_1,s_0)$
\li                        \Else  $(\height(\T') \leq \height(\T)$)
\li                              \proc{Flip}$(t_2,b_2,p_2)$   
\li                              \Return $\A$ and the stable pair $(s_1,s_2)$
                 \End
          \End
      \End
\end{codebox}
\end{small}

\noindent \emph{Acknowledgments} The authors would like to thank
Florian Sikora, Avraham Trahtman, and the anonymous referees for 
pointing us some missing configurations in the algorithm.
We also thank the referees for helping us to improve the presentation
of the paper.

\bibliographystyle{abbrv}
\bibliography{road}

\begin{thebibliography}{10}

\bibitem{AdlerCoppersmithHassner83}
R.~L. Adler, D.~Coppersmith, and M.~Hassner.
\newblock Algorithms for sliding block codes.
\newblock {\em IEEE Trans. Inform. Theory}, IT-29:5--22, 1983.

\bibitem{AdlerGoodwynWeiss77}
R.~L. Adler, L.~W. Goodwyn, and B.~Weiss.
\newblock Equivalence of topological {M}arkov shifts.
\newblock {\em Israel J. Math.}, 27(1):48--63, 1977.

\bibitem{BealPerrin2008}
M.-P. B{\'e}al and D.~Perrin.
\newblock A quadratic algorithm for road coloring.
\newblock {\em CoRR}, abs/0803.0726, 2008.

\bibitem{BerstelPerrinReutenauer2010}
J.~Berstel, D.~Perrin, and C.~Reutenauer.
\newblock {\em Codes and automata}, volume 129 of {\em Encyclopedia of
  Mathematics and its Applications}.
\newblock Cambridge University Press, Cambridge, 2010.

\bibitem{BoyleMass2000}
M.~Boyle and A.~Maass.
\newblock Expansive invertible onesided cellular automata.
\newblock {\em J. Math. Soc. Japan}, 52(4):725--740, 2000.

\bibitem{BoyleMass2004}
M.~Boyle and A.~Maass.
\newblock Erratum to: "{E}xpansive invertible onesided cellular automata" [{J}.
  {M}ath. {S}oc. {J}apan {\bf 52} (2000), no. 4, 725--740].
\newblock {\em J. Math. Soc. Japan}, 56(1):309--310, 2004.

\bibitem{BudzbanFeinsilver11}
G.~Budzban and P.~Feinsilver.
\newblock The generalized road coloring problem and periodic digraphs.
\newblock {\em Appl. Algebra Eng. Commun. Comput.}, 22(1):21--35, 2011.

\bibitem{CarayolNicaud2012}
A.~Carayol and C.~Nicaud.
\newblock Distribution of the number of accessible states in a random
  deterministic automaton.
\newblock In {\em STACS}, volume~14 of {\em LIPIcs}, pages 194--205. Schloss
  Dagstuhl - Leibniz-Zentrum fuer Informatik, 2012.

\bibitem{Carbone01}
A.~Carbone.
\newblock Cycles of relatively prime length and the road coloring problem.
\newblock {\em Israel J. Math.}, 123:303--316, 2001.

\bibitem{CulikEtAl1999}
K.~Culik, II, J.~Karhum{\"a}ki, and J.~Kari.
\newblock Synchronized automata and road coloring problem.
\newblock Technical report, TUCS Technical Report 323, Turku Center for
  Computer Science, University of Turku, 1999.

\bibitem{CulikEtAl02}
K.~Culik, II, J.~Karhum{\"a}ki, and J.~Kari.
\newblock A note on synchronized automata and road coloring problem.
\newblock In {\em Developments in language theory (Vienna, 2001)}, volume 2295
  of {\em Lecture Notes in Comput. Sci.}, pages 175--185. Springer, Berlin,
  2002.

\bibitem{DelyonMaler94}
B.~Delyon and O.~Maler.
\newblock On the effects of noise and speed on computations.
\newblock {\em Theoret. Comput. Sci.}, 129(2):279--291, 1994.

\bibitem{Eilenberg76B}
S.~Eilenberg.
\newblock {\em Automata, Languages, and Machines. {V}ol. {B}}.
\newblock Academic Press, New York, 1976.

\bibitem{Eppstein1990}
D.~Eppstein.
\newblock Reset sequences for monotonic automata.
\newblock {\em SIAM J. Comput.}, 19(3):500--510, 1990.

\bibitem{FreilingEtAl2003}
C.~F. Freiling, D.~S. Jungreis, F.~Th{\'e}berge, and K.~Zeger.
\newblock Almost all complete binary prefix codes have a self-synchronizing
  string.
\newblock {\em IEEE Transactions on Information Theory}, 49(9):2219--2225,
  2003.

\bibitem{Friedman90}
J.~Friedman.
\newblock On the road coloring problem.
\newblock {\em Proc. Amer. Math. Soc.}, 110(4):1133--1135, 1990.

\bibitem{JonoskaKarl1999}
N.~Jonoska and S.~A. Karl.
\newblock A molecular computation of the road coloring problem.
\newblock In {\em D{NA} based computers, {II} ({P}rinceton, {NJ}, 1996)},
  volume~44 of {\em DIMACS Ser. Discrete Math. Theoret. Comput. Sci.}, pages
  87--96. Amer. Math. Soc., Providence, RI, 1999.

\bibitem{Jurgensen2008}
H.~J{\"u}rgensen.
\newblock Synchronization.
\newblock {\em Inform. and Comput.}, 206(9-10):1033--1044, 2008.

\bibitem{Kari03}
J.~Kari.
\newblock Synchronizing finite automata on {E}ulerian digraphs.
\newblock {\em Theoret. Comput. Sci.}, 295(1-3):223--232, 2003.

\bibitem{KariVolkov2013}
J.~Kari and M.~V. Volkov.
\newblock {{\v C}ern{\'y}}'s conjecture and the road coloring problem.
\newblock In {\em Handbook of Automata}. European Science Foundation, 2013.
\newblock to appear.

\bibitem{LindMarcus95}
D.~A. Lind and B.~H. Marcus.
\newblock {\em An Introduction to Symbolic Dynamics and Coding}.
\newblock Cambridge University Press, Cambridge, 1995.

\bibitem{Nicaud2013}
C.~Nicaud.
\newblock On the synchronization of random deterministic automata.
\newblock preprint, 2013.

\bibitem{OBrien81}
G.~L. O'Brien.
\newblock The road-colouring problem.
\newblock {\em Israel J. Math.}, 39(1-2):145--154, 1981.

\bibitem{OlschewskiUmmels2010}
J.~Olschewski and M.~Ummels.
\newblock The complexity of finding reset words in finite automata.
\newblock In {\em MFCS}, volume 6281 of {\em Lecture Notes in Computer
  Science}, pages 568--579. Springer, 2010.

\bibitem{PerrinSchutzenberger92}
D.~Perrin and M.-P. Sch{\"u}tzenberger.
\newblock Synchronizing prefix codes and automata and the road coloring
  problem.
\newblock In {\em Symbolic dynamics and its applications}, volume 135 of {\em
  Contemp. Math.}, pages 295--318. Amer. Math. Soc., 1992.

\bibitem{PomeranzReddy1994}
I.~Pomeranz and S.~M. Reddy.
\newblock Application of homing sequences to synchronous sequential circuit
  testing.
\newblock {\em IEEE Trans. Computers}, 43(5):569--580, 1994.

\bibitem{Roman2011}
A.~Roman.
\newblock The {NP}-completeness of the road coloring problem.
\newblock {\em Inform. Process. Lett.}, 111(7):342--347, 2011.

\bibitem{SkvortsovZaks2010}
E.~S. Skvortsov and Y.~Zaks.
\newblock Synchronizing random automata.
\newblock {\em Discrete Mathematics {\&} Theoretical Computer Science},
  12(4):95--108, 2010.

\bibitem{Trahtman09}
A.~N. Trahtman.
\newblock The road coloring problem.
\newblock {\em Israel J. Math.}, 172:51--60, 2009.

\bibitem{Trahtman2010}
A.~N. Trahtman.
\newblock A partially synchronizing coloring.
\newblock In {\em Proceedings of CSR 2010}, volume 6072 of {\em Lecture Notes
  in Comput. Sci.}, pages 362--370. Springer-Verlag, 2010.

\bibitem{Trahtman2011}
A.~N. Trahtman.
\newblock An algorithm for road coloring.
\newblock In {\em Combinatorial algorithms}, volume 7056 of {\em Lecture Notes
  in Comput. Sci.}, pages 349--360. Springer, Heidelberg, 2011.

\end{thebibliography}


\end{document}